\documentclass{article}
\pdfoutput=1

\usepackage{algpseudocode}
\usepackage[Algorithm]{algorithm}
\usepackage{graphicx}
\usepackage{multirow}
\usepackage{caption}
\usepackage{amsmath}
\usepackage{amssymb}
\usepackage{algorithm}
\usepackage{subfig}
\usepackage{dblfloatfix}
\usepackage{amsthm}
\usepackage{url}

\newtheorem{theorem}{Theorem}[section]

\newtheorem{lemma}[theorem]{Lemma}
\newtheorem{corollary}[theorem]{Corollary}

\newtheorem{definition}{Definition}
\usepackage[margin=1.0in]{geometry}



%
%
%


\begin{document}
\title{Constant Time Updates in Hierarchical Heavy Hitters}
\author{
	Ran Ben Basat\\
	Technion\\
	\texttt{sran@cs.technion.ac.il}
	\and
	Gil Einziger\\
	Nokia Bell Labs\\
	\texttt{gil.einziger@nokia}
	\and
	Roy Friedman\\
	Technion\\
	\texttt{roy@cs.technion.ac.il}	
	\and
	Marcelo C. Luizelli\\
	UFRGS\\
	\texttt{mcluizelli@inf.ufrgs.br}	
	\and
	Erez Waisbard\\	
	Nokia Bell Labs\\
	\texttt{erez.waisbard@nokia.com}	
}

%
%
%
%


\date{}

\newcommand{\eps}{\epsilon}
\newcommand{\set}[1]{\left\{#1\right\}}
\newcommand{\ceil}[1]{ \left\lceil{#1}\right\rceil}
\newcommand{\floor}[1]{ \left\lfloor{#1}\right\rfloor}
\newcommand{\logp}[1]{\log\parentheses{#1}}
\newcommand{\clog}[1]{ \ceil{\log{#1}}}
\newcommand{\clogp}[1]{ \ceil{\logp{#1}} }
\newcommand{\flog}[1]{ \floor{\log{#1}}}
\newcommand{\parentheses}[1]{ \left({#1}\right)}
\newcommand{\abs}[1]{ \left|{#1}\right|}

\newcommand{\cdotpa}[1]{\cdot\parentheses{#1}}
\newcommand{\inc}[1]{$#1 = #1 + 1$}
\newcommand{\dec}[1]{$#1 = #1 - 1$}
\newcommand{\range}[2][0]{#1,1,\ldots,#2}
\newcommand{\frange}[1]{\set{\range{#1}}}
\newcommand{\xrange}[1]{\frange{#1-1}}
\newcommand{\oneOverE}{ \frac{1}{\eps} }
\newcommand{\oneOverT}{ \frac{1}{\tau} }
\newcommand{\smallMultError}{(1+o(1))}
\newcommand{\lowerbound}{\max \set{\log W ,\frac{1}{2\epsilon+W^{-1}}}}
\newcommand{\smallEpsLowerbound}{\window\logp{\frac{1}{\weps}}}
\newcommand{\smallEpsMemoryTheta}{$\Theta\parentheses{\smallEpsMemoryConsumption}$}
\newcommand{\smallEpsMemoryConsumption}{W\cdot\logp{\frac{1}{\weps}}}

\newcommand{\largeEpsRestriction}{For any \largeEps{},}
\newcommand{\largeEps}{$\eps^{-1} \le 2W\left(1-\frac{1}{\logw}\right)$}
\newcommand{\smallEpsRestriction}{For any \smallEps{},}
\newcommand{\smallEps}{$\eps^{-1}>2W\left(1-\frac{1}{\logw}\right)=2\window(1-o(1))$}
\newcommand{\bc}{{\sc Basic-Counting}}
\newcommand{\bs}{{\sc Basic-Summing}}
\newcommand{\windowcounting}{ {\sc $(W,\epsilon)$-Window-Counting}}

\newcommand{\query}[1][] {{\sc Query}$(#1)$}
\newcommand{\add}  [1][] {{\sc Add}$(#1)$}

\newcommand{\window}{W}
\newcommand{\logw}{\log \window}
\newcommand{\flogw}{\floor{\log \window}}
\newcommand{\weps}{\window\epsilon}
\newcommand{\wt}{\window\tau}
\newcommand{\logweps}{\logp{\weps}}
\newcommand{\logwt}{\logp{\wt}}
\newcommand{\bitarray}{b}
\newcommand{\currentBlockIndex}{i}
\newcommand{\currentBlock}{\bitarray_{\currentBlockIndex}}
\newcommand{\remainder}{y}
\newcommand{\numBlocks}{k}
\newcommand{\sumOfBits}{B}
\newcommand{\blockSize}{\frac{\window}{\numBlocks}}
\newcommand{\iblockSize}{\frac{\numBlocks}{\window}}
\newcommand{\threshold}{\blockSize}
\newcommand{\halfBlock}{\frac{\window}{2\numBlocks}}
\newcommand{\blockOffset}{m}
\newcommand{\inputVariable}{x}

\newcommand{\bcTableColumnWhh}{1.5cm}
\newcommand{\bsTableColumnWidth}{1.7cm}
\newcommand{\bsExtendedTableColumnWidth}{3cm}
\newcommand{\bcExtendedTableColumnWidth}{2.8cm}
\newcommand{\bcNarrowTableColumnWidth}{1.5cm}
\newcommand{\bsNarrowTableColumnWidth}{1.5cm}
\newcommand{\bsWorstCaseTableColumnWidth}{2cm}

\newcommand{\bsrange}{ R }
\newcommand{\bsReminderPercisionParameter}{ \gamma }
\newcommand{\bsest}{ \widehat{\bssum}}
\newcommand{\bssum}{ S^W }
\newcommand{\bsFracInput}{ \inputVariable' }
\newcommand{\bserror}{ \bsrange\window\epsilon }
\newcommand{\bsfractionbits}{ \frac{\bsReminderPercisionParameter}{\epsilon} }
\newcommand{\bsReminderFractionBits}{ \upsilon}
\newcommand{\bsAnalysisTarget}{ \bssum}
\newcommand{\bsAnalysisEstimator}{ \widehat{\bsAnalysisTarget}}
\newcommand{\bsAnalysisError}{ \bsAnalysisEstimator - \bsAnalysisTarget}
\newcommand{\bsRoundingError}{ \xi}


\newcommand{\neps}{\ensuremath{\winSize\eps}}
\newcommand{\Neps}{\ensuremath{\maxWinSize\eps}}
\newcommand{\logn}{\ensuremath{\log\winSize}}
\newcommand{\logN}{\ensuremath{\log\maxWinSize}}
\newcommand{\logneps}{\ensuremath{\logp\neps}}
\newcommand{\logNeps}{\ensuremath{\logp\Neps}}
\newcommand{\oneOverEps}{\ensuremath{\frac{1}{\eps}}}
\newcommand{\winSize}{\ensuremath{n}}
\newcommand{\maxWinSize}{\ensuremath{N}}
\newcommand{\curTime}{\ensuremath{t}}
\newcommand\Tau{\mathrm{T}}
\newcommand{\offset}{\ensuremath{\mathit{offset}}}
\newcommand{\roundedOOE}{k}
\newcommand{\numLargeBlocks}{\frac{\roundedOOE}{4}}
\newcommand{\numSmallBlocks}{\frac{\roundedOOE}{2}}

\newcommand{\remove}{{\sc Remove()}}
\newcommand{\merge}[1]{{\sc Merge(#1)}}
\newcommand{\counting}{{\sc Counting}}
\newcommand{\summing}{{\sc Summing}}
\newcommand{\freq}{{\sc Frequency Estimation}}
\newtheorem{problem}[theorem]{Problem}

\newenvironment{remark}[1][Remark]{\begin{trivlist}
		\item[\hskip \labelsep {\bfseries #1}]}{\end{trivlist}}

\newcommand{\NB}{\psi}
\newcommand{\NBound}{{Z_{1 - \frac{{{\delta _s}}}{2}}}V{\varepsilon_s}^{ - 2}}
\newcommand{\matrixCellWidth}{5.8cm}
\renewcommand{\arraystretch}{1.33}
\newcommand{\cmark}{\ding{51}}%
\newcommand{\xmark}{\ding{55}}%
%
%
%
%
%

\maketitle
\begin{abstract}
Monitoring tasks, such as anomaly and DDoS detection, require identifying frequent flow aggregates based on common IP prefixes.
These are known as \emph{hierarchical heavy hitters} (HHH), where the hierarchy is determined based on the type of prefixes of interest in a given application.
The per packet complexity of existing HHH algorithms is proportional to the size of the hierarchy, imposing significant overheads.
	
In this paper, we propose a randomized constant time algorithm for HHH.
We prove probabilistic precision bounds backed by an empirical evaluation.
Using four real Internet packet traces, we demonstrate that our algorithm indeed obtains comparable accuracy and recall as previous works, while running up to 62 times faster.
Finally, we extended Open vSwitch (OVS) with our algorithm and showed it is able to handle 13.8 million packets per second. 
In contrast, incorporating previous works in OVS only obtained 2.5 times lower throughput.
	
\end{abstract}

\section{Introduction}
Network measurements are essential for a variety of network functionalities such as traffic engineering, load balancing, quality of service, caching, anomaly and intrusion detection~\cite{LBSigComm,DevoFlow,ApproximateFairness,TrafficEngeneering,IntrusionDetection2,7218487,CONGA,TinyLFU}.
A major challenge in performing and maintaining network measurements comes from rapid line rates and the large number of active flows.

Previous works suggested identifying \emph{Heavy Hitter} (HH) flows~\cite{Woodruff16} that account for a large portion of the traffic.
Indeed, approximate HH are used in many functionalities and can be captured quickly and efficiently~\cite{HashPipe,ICCCNPaper,WCSS,DIM-SUM,RAP}.
However, applications such as anomaly detection and \emph{Distributed Denial of Service} (DDoS) attack detection require more sophisticated measurements~\cite{Zhang:2004:OIH:1028788.1028802,Sekar2006}.
In such attacks, each device generates a small portion of the traffic but their combined volume is overwhelming.
HH measurement is therefore insufficient as each individual device is not a heavy hitter.
\begin{figure}[t!]
	\includegraphics[width = 0.8\columnwidth]{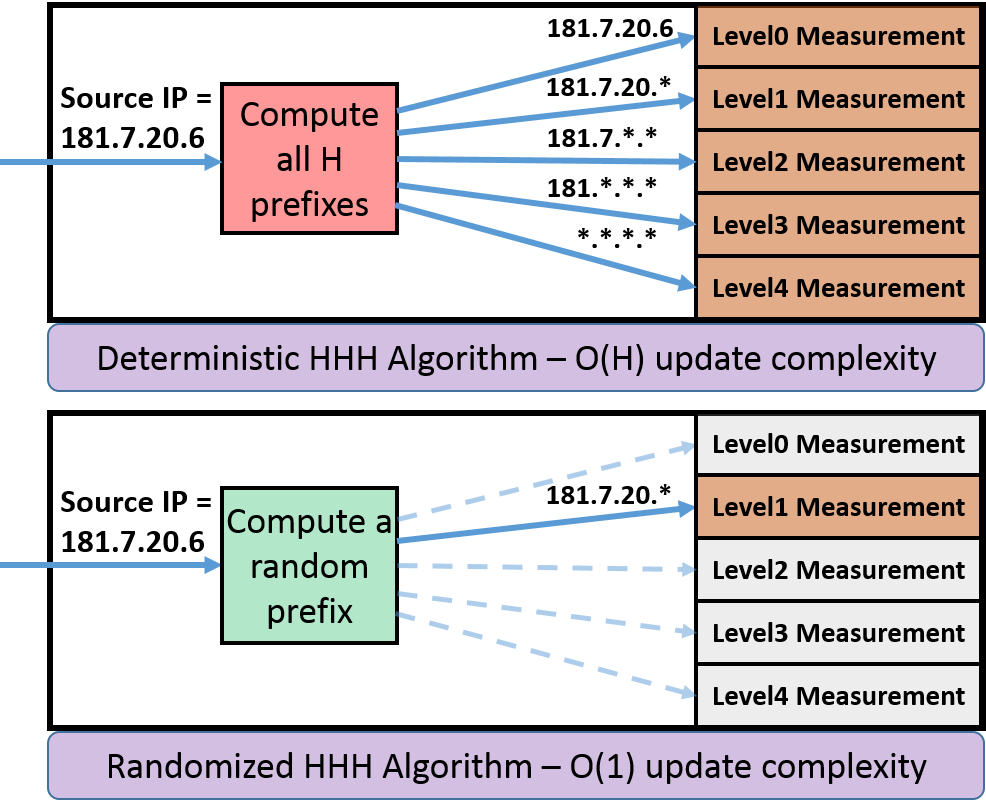}
	
	\caption{A high level overview of this work. Previous algorithms' update requires $\Omega(H)$ run time, while we perform at most a single $O(1)$ update.
		}
	\label{fig:contribution}
\end{figure}

\emph{Hierarchical Heavy Hitters} (HHH) account aggregates of flows that share certain IP prefixes.
The structure of IP addresses implies a prefix based hierarchy as defined more precisely below.
In the DDoS example, HHH can identify IP prefixes that are suddenly responsible for a large portion of traffic and such an anomaly may very well be a manifesting attack.
Further, HHH can be collected in one dimension, e.g., a single source IP prefix hierarchy, or in multiple dimensions, e.g., a hierarchy based on both source and destination IP~prefixes. 

Previous works~\cite{HHHMitzenmacher,CormodeHHH} suggested deterministic algorithms whose update complexity is proportional to the hierarchy's size.
These algorithms are currently too slow to cope with line speeds.
For example, a 100 Gbit link may deliver over 10 million packets per second, but previous HHH algorithms cannot cope with this line speed on existing hardware.
The transition to IPv6 is expected to increase hierarchies' sizes and render existing approaches even~slower.

Emerging networking trends such as \emph{Network Function Virtualization} (NFV) enable virtual deployment of network functionalities.
These are run on top of commodity servers rather than on custom made hardware, thereby improving the network's flexibility and reducing operation costs.
These trends further motivate fast software based measurement algorithms.

\subsection{Contributions}

First, we define a probabilistic relaxation of the HHH problem.
Second, we introduce \emph{Randomized HHH} (a.k.a. RHHH), a novel randomized algorithm that solves probabilistic HHH over single and multi dimensional hierarchical domains.
Third, we evaluate RHHH on four different real Internet traces and demonstrate a speedup of up to X62 while delivering similar accuracy and recall ratios.
Fourth, we integrate RHHH with \emph{Open vSwitch} (OVS) and demonstrate a capability of monitoring HHH at line speed, achieving a throughput of up to 13.8M packets per second.
Our algorithm also achieves X2.5 better throughput than previous approaches.
To the best of our knowledge, our work is the first to perform OVS multi dimensional HHH analysis in line speed.

Intuitively, our RHHH algorithm operates in the following way, as illustrated in Figure~\ref{fig:contribution}:
We maintain an instance of a heavy-hitters detection algorithm for each level in the hierarchy, as is done in~\cite{HHHMitzenmacher}.
However, whenever a packet arrives, we randomly select only a single level to update using its respective instance of heavy-hitters rather than updating all levels (as was done in~\cite{HHHMitzenmacher}).
Since the update time of each individual level is $O(1)$, we obtain an $O(1)$ \emph{worst case} update time.
The main challenges that we address in this paper are in formally analyzing the accuracy of this scheme and exploring how well it works in practice with a concrete implementation.

The update time of previous approaches is $O(H)$, where $H$ is the size of the hierarchy.
An alternative idea could have been to simply sample each packet with probability $\frac{1}{H}$, and feed the sampled packets to previous solutions.
However, such a solution only provides an $O(1)$ \emph{amortized} running time.
Bounding the worst case behavior to $O(1)$ is important when the counters are updated inside the data path.
In such cases, performing an occasional very long operation could both delay the corresponding ``victim'' packet, and possibly cause buffers to overflow during the relevant long processing.
Even in off-path processing, such as in an NFV setting, occasional very long processing creates an unbalanced workload, challenging schedulers and resource allocation schemes.

\paragraph*{Roadmap} The rest of this paper is organized as follows:
We survey related work on HHH in Section~\ref{sec:related}.
We introduce the problem and our probabilistic algorithm in Section~\ref{sec:randomized}.
For presentational reasons, we immediately move on to the performance evaluation in Section~\ref{sec:eval} followed by describing the implementation in OVS in Section~\ref{sec:ovs}.
We then prove our algorithm and analyze its formal guarantees in Section~\ref{sec:analysis}.
Finally, we conclude with a discussion in Section~\ref{sec:discussion}.

 \begin{table*}[t]
	 \footnotesize
 	\addtolength{\tabcolsep}{-0pt} \addtolength{\parskip}{-0.1mm} \center{
 		\begin{tabular}{|c|c|c|c|c|c|}
 			\hline
 			\textbf{Src/Dest } & \textbf{{*}} & \textbf{d1.{*}} & \textbf{d1.d2.{*}} & \textbf{d1.d2.d3.{*}} & \textbf{d1.d2.d3.d4}\tabularnewline
 			\hline
 			\hline
 			\textbf{{*}} & ({*},{*}) & ({*},d1.{*}) & ({*},d1.d2.{*}) & ({*},d1.d2.d3.{*}) & ({*},d1.d2.d3.d4)\tabularnewline
 			\hline
 			\textbf{s1.{*}} & (s1.{*},{*}) & (s1.{*},d1.{*}) & (s1.{*},d1.d2.{*}) & (s1.{*},d1.d2.d3.{*}) & (s1.{*},d1.d2.d3.d4)\tabularnewline
 			\hline
 			\textbf{s1.s2.{*}} & (s1.s2.{*},{*}) & (s1.s2.{*},d1.{*}) & (s1.s2.{*},d1.d2.{*}) & (s1.s2.{*},d1.d2.d3.{*}) & (s1.s2.{*},d1.d2.d3.d4)\tabularnewline
 			\hline
 			\textbf{s1.s2.s3.{*}} & (s1.s2.s3.{*},{*}) & (s1.s2.s3.{*},d1.{*}) & (s1.s2.s3.{*},d1.d2.{*}) & (s1.s2.s3.{*},d1.d2.d3.{*}) & (s1.s2.s3.{*},d1.d2.d3.d4)\tabularnewline
 			\hline
 			\textbf{s1.s2.s3.s4} & (s1.s2.s3.s4,{*}) & (s1.s2.s3.s4,d1.{*}) & (s1.s2.s3.s4,d1.d2.{*}) & (s1.s2.s3.s4,d1.d2.d3.{*}) & (s1.s2.s3.s4,d1.d2.d3.d4)\tabularnewline
 			\hline
 		\end{tabular}
 		
 		{{\caption{An example of the lattice induced by a two dimensional source/destination byte hierarchy. The top left corner (*,*) is fully general while the bottom right (s1.s2,s3.s4,d1.d2.d3.d4) is fully specified. The parents of each node are directly above it and directly to the left.}
 				\label{tbl:example}
 			}}}
 		\end{table*}
\normalsize

	

\section{Related Work}
\label{sec:related}

In one dimension, HHH were first defined by~\cite{Cormode2003}, which also introduced the first streaming algorithm to approximate them.
Additionally,~\cite{HHHSwitch} offered a TCAM approximate HHH algorithm for one dimension.
The HHH problem was also extended to multiple dimensions~\cite{Cormode2004,CormodeHHH,Hershberger2005,Zhang:2004:OIH:1028788.1028802,HHHMitzenmacher}.

The work of~\cite{Lin2007} introduced a single dimension algorithm that requires \small$O\left(\frac{H^2}{\epsilon}\right)$\normalsize space, where the symbol $H$ denotes the size of the hierarchy and $\epsilon$ is the allowed relative estimation error for each single flow's frequency.
Later,~\cite{Truong2009} introduced a two dimensions algorithm that requires \small $O\left(\frac{H^{3/2}}{\epsilon}\right)$\normalsize space and update time\footnote{Notice that in two dimensions, $H$ is a square of its counter-part in one dimension.}.
In~\cite{CormodeHHH}, the trie based Full Ancestry and Partial Ancestry algorithms were proposed. These use $O\left(\frac{H\log(N\epsilon)}{\epsilon}\right)$ space and requires $O\left(H\log(N\epsilon)\right)$ time per update.

The seminal work of \cite{HHHMitzenmacher}~introduced and evaluated a simple multi dimensional HHH algorithm.
Their algorithm uses a separate copy of Space Saving~\cite{SpaceSavings} for each lattice node and upon packet arrival, all lattice nodes are updated.  
Intuitively, the problem of finding hierarchical heavy hitters can be reduced to solving multiple non hierarchical heavy hitters problems, one for each possible query.
This algorithm provides strong error and space guarantees and its update time does not depend on the stream length.
Their algorithm requires $O\left(\frac{H}{\epsilon}\right)$ space and its update time for unitary inputs is $O\left(H\right)$ while for weighted inputs it is $O\left(H \log \frac{1}{\epsilon}\right)$.

The update time of existing methods is too slow to cope with modern line speeds and the problem escalates in NFV environments that require efficient software implementations.
This limitation is both empirical and asymptotic as some settings require large hierarchies.  

Our paper describes a novel algorithm that solves a probabilistic version of the hierarchical heavy hitters problem.
We argue that in practice, our solution's quality is similar to previously suggested deterministic approaches while the runtime is dramatically improved.
Formally, we improve the update time to $O(1)$, but require a minimal number of packets to provide accuracy guarantees.
We argue that this trade off is attractive for many modern networks that route a continuously increasing number of packets.


\section{Randomized HHH (RHHH)}
\label{sec:randomized}
We start with an intuitive introductory to the field as well as preliminary definitions and notations.
Table~\ref{tbl:notations} summarizes notations used in this work.
\subsection{Basic terminology}
\label{sec:terminology}

We consider IP addresses to form a hierarchical domain with either bit or byte size granularity.
\emph{Fully specified} IP addresses are the lowest level of the hierarchy and can be generalized.
We use $\mathcal U$ to denote the domain of fully specified items.
For example, $181.7.20.6$ is a fully specified IP address and $181.7.20.*$ generalizes it by a single byte.
Similarly, $181.7.*$ generalizes it by two bytes and formally, a fully specified IP address is generalized by any of its prefixes.
The \emph{parent} of an item is the longest prefix that generalizes it.

In two dimensions, we consider a tuple containing source and destination IP addresses.
A fully specified item is fully specified in both dimensions.
For example,
$(\langle181.7.20.6\rangle\to \langle 208.67.222.222\rangle)$
is fully specified.
In two dimensional hierarchies, each item has two parents, e.g.,
$(\langle181.7.20.*\rangle\to \langle 208.67.222.222\rangle)$
and
$(\langle181.7.20.6\rangle\to \langle 208.67.222.*\rangle)$
are both parents to\\
$(\langle181.7.20.6\rangle\to \langle 208.67.222.222\rangle)$.

\begin{definition}[Generalization]
	For two prefixes $p,q$, we denote $p \preceq q$ if in any dimension it is either a prefix of $q$ or is equal to $q$.
	We also denote the set of elements that are generalized by $p$ with $H_p\triangleq \{e\in \mathcal U\mid e\preceq p\}$, and those generalized by a set of prefixes $P$ by $H_P\triangleq \cup_{p\in P} H_p$. If $p \preceq q$ and $p\neq q$, we denote $p\prec q$.
\end{definition}
In a single dimension, the generalization relation defines a vector going from fully generalized to fully specified.
In two dimensions, the relation defines a lattice where each item has two parents.
A byte granularity two dimensional lattice is illustrated in Table~\ref{tbl:example}.
In the table, each lattice node is generalized by all nodes that are upper or more to the left.
The most generalized node $(*,*)$ is called \emph{fully general} and the most specified node $(s1.s2.s3.s4, d1.d2.d3.d4)$ is called \emph{fully specified}.
We denote $H$ the hierarchy's size as the number of nodes in the lattice.
For example, in IPv4, byte level one dimensional hierarchies imply $H=5$ as each IP address is divided into four bytes and we also allow querying $*$.

\begin{definition}
	Given a prefix $p$ and a set of prefixes $P$,
	we define $G(p|P)$ as the set of prefixes:
	$$\left\{ {h:h \in P,h \prec p,\nexists\,h' \in P\,\,s.t.\,h \prec h' \prec p} \right\}.$$
\end{definition}
Intuitively, $G(p|P)$ are the prefixes in $P$ that are most closely generalized by $p$.
E.g., let $p=<142.14.*>$ and the set \\$P = \left\{ {<142.14.13.*>,<142.14.13.14>} \right\}$, then $G(p|P)$ only contains $<142.14.13.*>$.

We consider a stream $\mathbb{S}$, where at each step a packet of an item $e$ arrives.
Packets belong to a hierarchical domain of size $H$, and can be generalized by multiple prefixes as explained above. 
Given a fully specified item $e$, $f_e$ is the number of occurrences $e$ has in $\mathbb{S}$.
Definition~\ref{def:frequency} extends this notion to prefixes.
\begin{definition} (Frequency)
	\label{def:frequency}
	Given a prefix $p$, the frequency of $p$ is: $$f_p \triangleq \sum\nolimits_{e \in H_p} {f_e} .$$
\end{definition}




Our implementation utilizes Space Saving~\cite{SpaceSavings}, a popular (non hierarchical) heavy hitters algorithm, but other algorithms can also be used.
Specifically, we can use any \emph{counter algorithm} that satisfies Definition~\ref{Def:probFE} below and can also find heavy hitters, such as~\cite{frequent4,BatchDecrement,LC}.
We use Space Saving because it is believed to have an empirical edge over other algorithms~\cite{SpaceSavingIsTheBest,SpaceSavingIsTheBest2009,SpaceSavingIsTheBest2010}.

\begin{table}[h]
	\centering

	\begin{tabular}{|c|l|}
		
		\hline
		Symbol & Meaning \tabularnewline
		\hline
		$\mathbb{S}$ & Stream \tabularnewline
		\hline
		$N$ & Current number of packets (in all flows)  \tabularnewline
		\hline
		$H$ & Size of Hierarchy  \tabularnewline
		\hline
		$V$ & Performance parameter, $V\ge H$  \tabularnewline
		\hline
		$S^i_x$ & Variable for the i'th appearance of a prefix $x$. 
		\tabularnewline
		\hline
		$S_{x}$ & Sampled prefixes with id $x$. 
		\tabularnewline
		\hline
		$S$ &  Sampled prefixes from all ids. 
		\tabularnewline
		
		\hline
		$\mathcal U$ & Domain of fully specified items. \tabularnewline
		\hline
		$\epsilon,\epsilon_s,\epsilon_a $ & Overall, sample, algorithm's error guarantee.  \tabularnewline
		\hline
		$\delta,\delta_s , \delta_a$ &   Overall, sample, algorithm confidence. \tabularnewline
		\hline
		$\theta$ & Threshold parameter.    \tabularnewline
		\hline

		$C_{q|P}$ & Conditioned frequency of $q$ with respect to  $P$ \tabularnewline
		\hline
		$G(q|P)$ & Subset of $P$ with the closest prefixes to q.  \tabularnewline
		\hline
		$f_q$ & Frequency of prefix q \tabularnewline
		\hline
		$\widehat{f^{+}_q},\widehat{f^{-}_q}$ & Upper,lower bound for $f_q$  \tabularnewline
		\hline
	\end{tabular}
	\caption{List of Symbols}
	\label{tbl:notations}
	
\end{table}
\normalsize




The minimal requirements from an algorithm to be applicable to our work are defined in Definition~\ref{Def:probFE}.
This is a weak definition and most counter algorithms satisfy it with $\delta =0$.
Sketches~\cite{CountSketch,CMSketch,TinyTable} can also be applicable here, but to use them, each sketch should also maintain a list of heavy hitter items (Definition~\ref{def:HH}).
\begin{definition}
\label{Def:probFE}
An algorithm solves the {\sc {$(\epsilon, \delta)$ - Frequency Estimation}} problem if for any prefix ($x$),
it provides  $\widehat{f_{x}}$ s.t.:
$$\Pr \left[ {\left| {{f_x} - \widehat {{f_x}}} \right| \le \varepsilon N} \right] \ge 1 - \delta . $$
\end{definition}

\begin{definition}[Heavy hitter (HH)]\label{def:HH}
Given a threshold $(\theta)$, a fully specified item $(e)$ is a \textbf{heavy hitter} if its frequency $(f_e)$ is above the threshold:  $\theta \cdot N$, i.e., $f_e \ge \theta\cdot N$.
\end{definition}

Our goal is to identify the hierarchical heavy hitter prefixes whose frequency is above the threshold $(\theta \cdot N)$.
However, if the frequency of a prefix exceeds the threshold then so is the frequency of all its ancestors.
For compactness, we are interested in prefixes whose frequency is above the threshold due to non HHH siblings.
This motivates the definition of  \emph{conditioned frequency} ($C_{p|P}$).
Intuitively, $C_{p|P}$ measures the \textbf{additional} traffic prefix $p$ adds to a set of previously selected HHHs ($P$), and it is defined as follows.
%
\begin{definition} (Conditioned frequency)
	\label{def:cf}
	The conditioned frequency of a prefix $p$ with respect to a prefix set $P$ is: $$C_{p\mid P}\triangleq\allowbreak \sum_{e\in H_{(P\cup\{p\})} \setminus H_P} f_e.$$
\end{definition}
$C_{p\mid P}$ is derived by subtracting the frequency of fully specified items that are already generalized by items in $P$ from $p$'s frequency ($f_p$).
In two dimensions, exclusion inclusion principles are used to avoid double counting.

We now continue and describe how exact hierarchical heavy hitters (with respect to $C_{p\mid P}$) are found.
To that end, partition the hierarchy to levels as explained in Definition~\ref{Def:L}.

\begin{definition}[Hierarchy Depth]
	\label{Def:L}
	Define $L$, the \emph{depth of a hierarchy}, as follows:
    Given a fully specified element $e$, we consider a set of prefixes such that:  $e \prec p_1 \prec p_2,..\prec p_L$ where $e \neq p_1 \neq p_2 \neq ... \neq p_{L}$ and $L$ is the maximal size of that set.
	We also define the function $level(p)$ that given a prefix $p$ returns $p$'s maximal location in the chain, i.e., the maximal chain of generalizations that ends in $p$.
\end{definition}

To calculate exact heavy hitters, we go over fully specified items ($level  0$) and add their heavy hitters to the set $HHH_0$.
Using $HHH_0$, we calculate conditioned frequency for prefixes in $level 1$ and if $C_{p|{HHH_0}} \ge \theta \cdot N$ we add $p$ to $HHH_1$.
We continue this process until the last level ($L$) and the exact heavy hitters are the set $HHH_L$. Next, we define $HHH$ formally.

\begin{definition}[Hierarchical HH (HHH)]
	\label{def:HHH}
	
		 The set $HHH_0$ contains the fully specified items $e$ s.t. $f_e \ge \theta\cdot N$.
		 Given a prefix $p$ from level($l$), $0\le l\le L$, we define:
		\[\begin{array}{l} HH{H_l} = 
		HH{H_{l -1}} \cup \left\{ {p:\left( {p \in level\left( l \right) \wedge {C_{p|HH{H_{l - 1}}}} \ge \theta  \cdot N} \right)} \right\} .
		\end{array}\]
	 The set of exact hierarchical heavy hitters $HHH$ is defined as the set $HHH_L$.
\end{definition}

For example, consider the case where $\theta N =100$ and assume that the following prefixes with their frequencies are the only ones above $\theta N$.
$p_1 = (<101.*>, 108)$ and $p_2 = (<101.102.*>, 102)$.
Clearly, both prefixes are heavy hitters according to Definition~\ref{def:HH}.
However, the conditioned frequency of $p1$ is $108-102  = 6$ and that of $p_2$ is 102.
Thus only $p_2$ is an HHH prefix.

Finding exact hierarchical heavy hitters requires plenty of space.
Indeed, even finding exact (non hierarchical) heavy hitters requires linear space~\cite{TCS-002}.
Such a memory requirement is prohibitively expensive and motivates finding approximate HHHs.  

\begin{definition}[$(\epsilon,\theta)-$approximate HHH]
\label{def:approxHHH}
An algorithm solves {\sc {$(\epsilon, \theta)$ - Approximate Hierarchical Heavy Hitters}} if after processing any stream $\mathbb{S}$ of length $N$, it returns a set of prefixes ($P$) that satisfies the following conditions:
	\begin{itemize}
		\item \textbf{Accuracy:} for every prefix $p\in P$,  ${\left| {{f_p} - \widehat {{f_p}}} \right| \le \varepsilon N}$.
		\item \textbf{Coverage:} for every prefix $q \notin P$:  ${{C_{q|P}} < \theta N}$.
	\end{itemize}
\end{definition}
Approximate HHH are a set of prefixes ($P$) that satisfies accuracy and coverage; there are many possible sets that satisfy both these properties.
Unlike exact HHH, we do no require that for $p \in P$, $C_{p|P} \ge \theta N$.
Unfortunately, if we add such a requirement then~\cite{Hershberger2005} proved a lower bound of $\Omega\left(\frac{1}{\theta^{d+1}}\right)$ space, where $d$ is the number of dimensions.
This is considerably more space than is used in our work ($\frac{H}{\epsilon}$) that when $\theta \propto \epsilon$ is also $\frac{H}{\theta}$.

Finally, Definition~\ref{def:deltaapproxHHH} defines the probabilistic approximate HHH problem that is solved in this paper.
\begin{definition}[$(\delta,\epsilon,\theta)-$approximate HHHs]
	\label{def:deltaapproxHHH}
	 An algorithm $\mathbb A$ solves {\sc {$(\delta,\epsilon, \theta)$ - Approximate Hierarchical Heavy Hitters}} if after processing any stream $\mathbb{S}$ of length $N$, it returns a set of prefixes $P$ that, for an arbitrary run of the algorithm, satisfies the following:
 \begin{itemize}
\item \textbf{Accuracy:} for every prefix $p\in P$, $$\Pr \left( {\left| {{f_p} - \widehat {{f_p}}} \right| \le \varepsilon N} \right) \ge 1 - \delta.$$
\item \textbf{Coverage:} given a prefix $q \notin P$,  $$\Pr \left( {{C_{q|P}} < \theta N} \right) \ge 1 - \delta .$$
\end{itemize}

\end{definition}
Notice that this is a simple probabilistic relaxation of Definition~\ref{def:approxHHH}.
Our next step is to show how it enables the development of faster algorithms.

\begin{algorithm}[h]
	\begin{algorithmic}[1]
\Statex Initialization: $\forall d\in[L]: HH[d] =$ HH\_Alg $(\epsilon_a^{-1})$
		\Function{Update}{ $x$}
		\State $d = randomInt(0,V)$
		\If {$d<H$}
		\State Prefix $p = x\&HH[d].mask$ \Comment{Bitwise AND}
		\State $HH[d].INCREMENT(p)$
		\EndIf
		\EndFunction

		\Function{Output}{$\theta$}
		\State $P = \phi$
		\For{Level $l = |H|$ down to $0$. }
		\For{ each $p$ in level $l$}
		\State \label{line:cp}$\widehat{C_{p|P}} = \widehat{f_p}^{+} + calcPred(p,P) $
		\State \label{line:accSample}$\widehat{C_{p|P}} = \widehat{C_{p|P}}+ 2{{Z_{1 - {\delta}}}\sqrt {NV} }$
		\If {$\widehat{C_{p|P}}\ge \theta N$}
		\State $ P = P \cup \{p\}$ \Comment{$p$ is an HHH candidate}
		\State $print\left(p, \widehat{f_p}^{-}, \widehat{f_p}^{+}\right)$
		\EndIf
		\EndFor
		\EndFor
		\State\Return $P$
		\EndFunction
		
	\end{algorithmic}
\normalsize
	\caption{Randomized HHH algorithm}
	\label{alg:Skipper}
\end{algorithm}

\begin{algorithm}[h]
	\begin{algorithmic}[1]
		\Function{calcPred}{prefix $p$, set $P$}
		\State $R = 0$
		\For{ each $h\in G(p|P)$}
		\State \label{alg:first}$R = R - \widehat{f_h}^{-}$
		\EndFor
		\State \Return $R$
		\EndFunction
	\end{algorithmic}
	\normalsize
	\caption{calcPred for one dimension }
	\label{alg:randHHH}
\end{algorithm}

\begin{algorithm}[h]
	\begin{algorithmic}[1]
		\Function{calcPred}{prefix $p$, set $P$}
		\State $R = 0$
		\For{ each $h\in G(p|P)$}
		\State \label{alg:second}$R = R - \widehat{f_h}^{-}$
		\EndFor
		\For{ each pair $h,h'\in G(p|P)$}
		\State $q=glb(h,h')$
		\If {$\not\exists h_3 \neq h,h'\in G(p|P), q \preceq h_3$}
		\State \label{alg:third}$R = R + \widehat{f_q}^{+}$
		\EndIf
		\EndFor
		\State \Return $R$
		\EndFunction
	\end{algorithmic}
	\normalsize
	\caption{calcPred for two dimensions}
	\label{alg:randHHH2D}
\end{algorithm}

\subsection{Randomized HHH}
Our work employs the data structures of~\cite{HHHMitzenmacher}.
That is, we use a matrix of $H$ independent HH algorithms, and each node is responsible for a single prefix pattern.

\newcommand{\fourFigWidth}{4.2cm}
\newcommand{\thirdFigWidth}{5.47cm}
\newcommand{\twoFigWidth}{8.2cm}
\begin{figure*}[t]
	\captionsetup{justification=centering}
	\begin{tabular}{cccc}
		\subfloat[Chicago15\label{ACC:CH15:2Dbyte} - 2D Bytes]{\includegraphics[width = \fourFigWidth ]
			{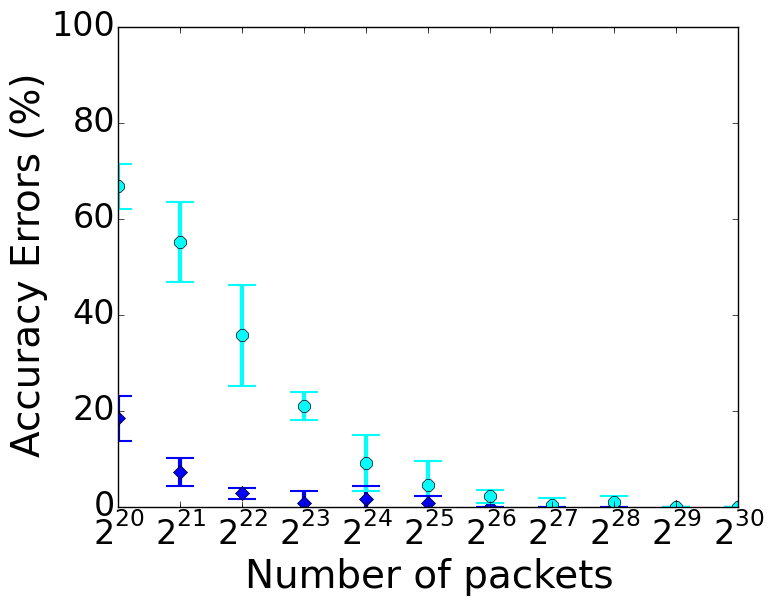}}  &
		\subfloat[Chicago16\label{ACC:CH16:2Dbyte} - 2D Bytes]{\includegraphics[width = \fourFigWidth]
			{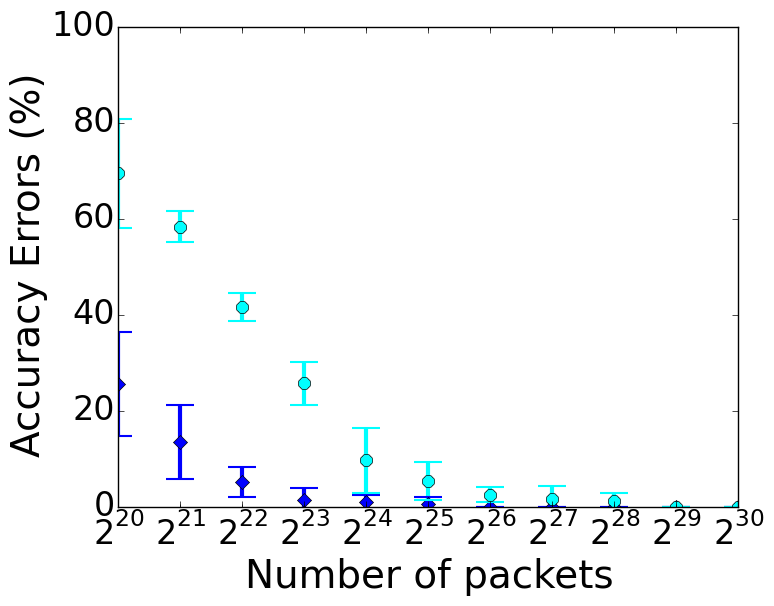}}  &
		\subfloat[SanJose13\label{ACC:SJ13:2Dbyte} - 2D Bytes]{\includegraphics[width = \fourFigWidth]
			{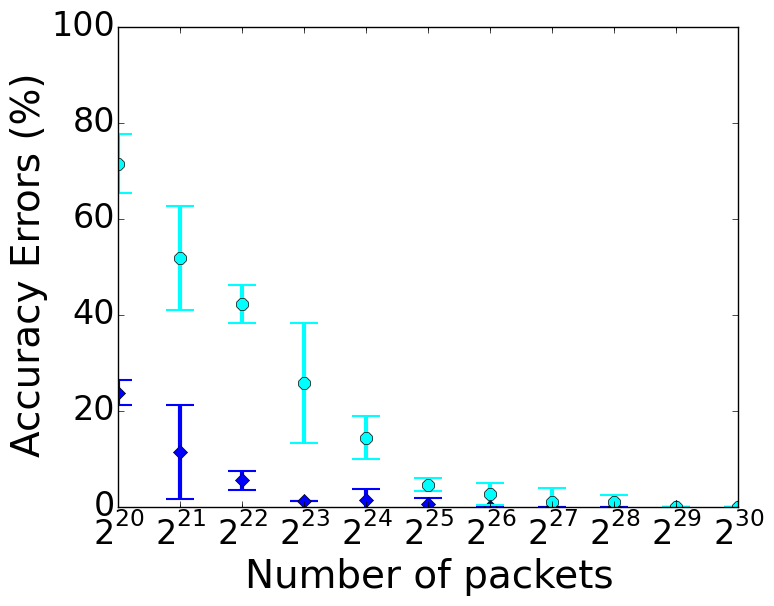}}  &
		\subfloat[SanJose14\label{ACC:SJ14:2Dbyte} - 2D Bytes]{\includegraphics[width = \fourFigWidth]
			{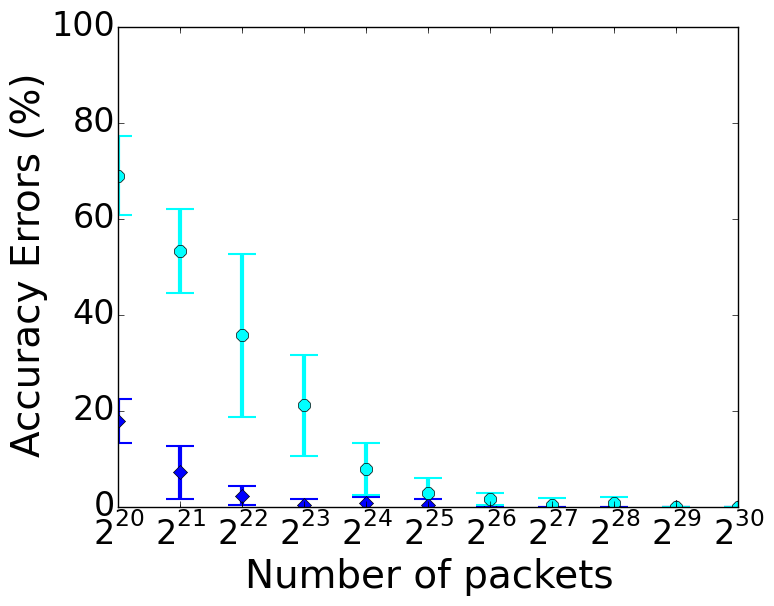}}  	
	\end{tabular}
	\caption{	\label{fig:acc} Accuracy error ratio -- HHH candidates whose frequency estimation error is larger than $N\epsilon$ ($\epsilon =0.001$).}		
	\captionsetup{justification=centering}
	\begin{tabular}{cccc}
		\multicolumn{4}{c}{\subfloat{\includegraphics[width = 7cm, height=0.6cm]
				{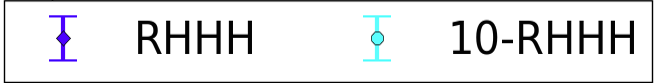}}}
		\tabularnewline
		\addtocounter{subfigure}{-1}
		\subfloat[Chicago15\label{COV:CH15:2Dbyte} - 2D Bytes]{\includegraphics[width = \fourFigWidth]
			{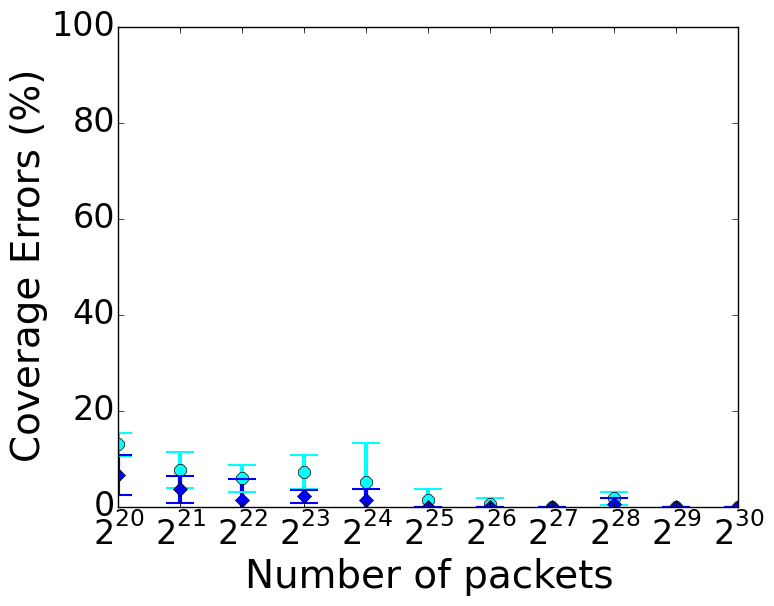}}  &
		\subfloat[Chicago16\label{COV:CH16:2Dbyte} - 2D Bytes]{\includegraphics[width = \fourFigWidth]
			{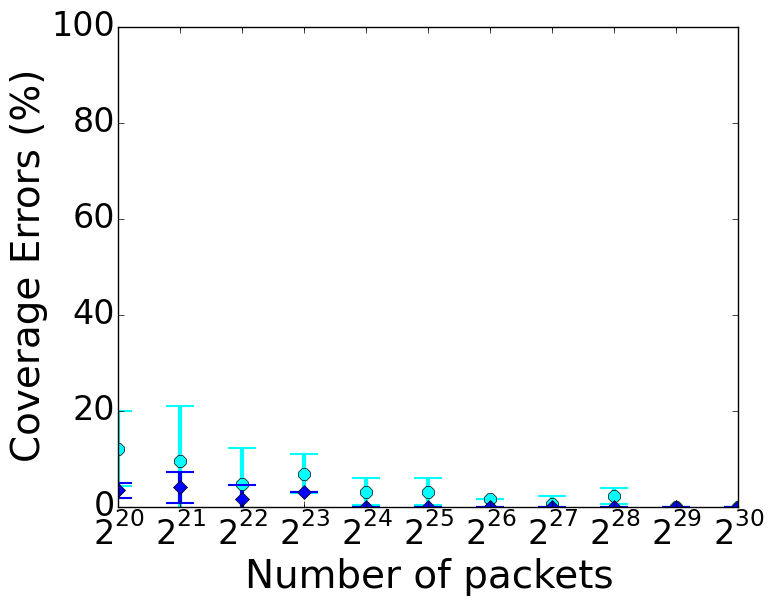}}  &
		\subfloat[SanJose13\label{COV:SJ13:2Dbyte} - 2D Bytes]{\includegraphics[width = \fourFigWidth]
			{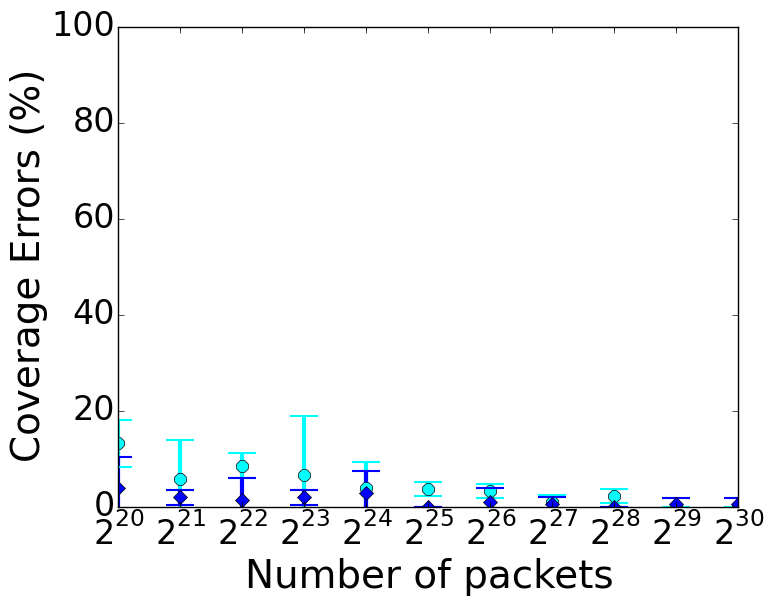}}  &
		\subfloat[SanJose14\label{COV:SJ14:2Dbyte} - 2D Bytes]{\includegraphics[width = \fourFigWidth]
			{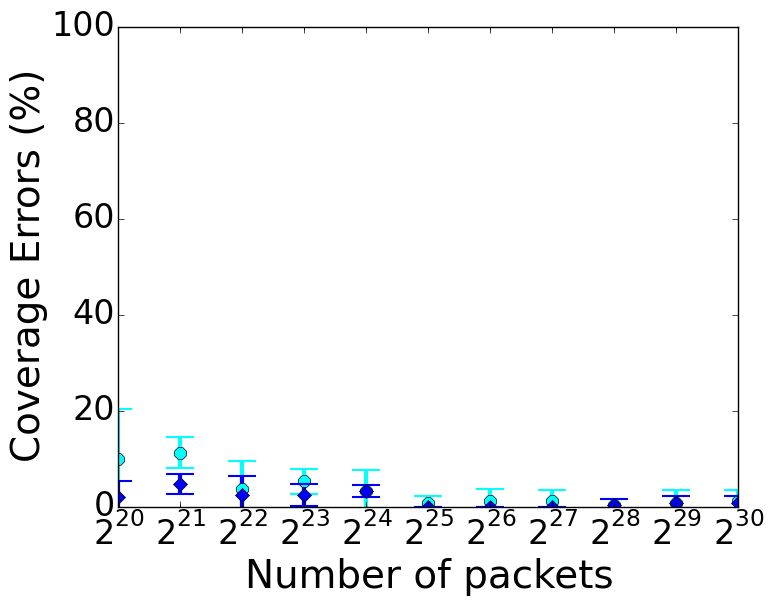}}   	
		
	\end{tabular}
	\caption{\label{fig:cov} The percentage of Coverage errors -- elements $q$ such that  $q\notin P$ and $C_{q\mid P}\ge N\theta$ (false negatives).
		}
\end{figure*}
\begin{figure*}[t]
	\begin{tabular}{ccc}
		\subfloat[\label{FPR:SJ:Byte}SanJose14 - 1D Bytes]{\includegraphics[width = \thirdFigWidth]	{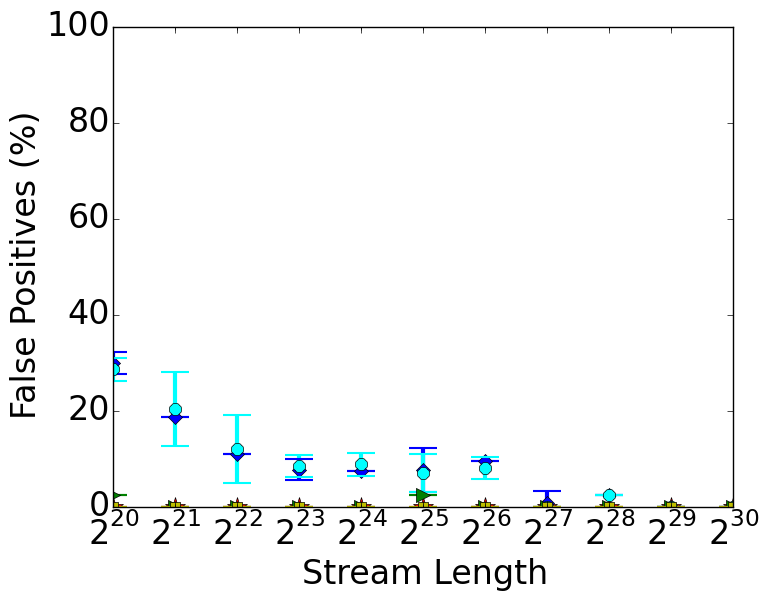}} &	
		\subfloat[SanJose14\label{FPR:SJ:bit} - 1D Bits]{\includegraphics[width = \thirdFigWidth]
			{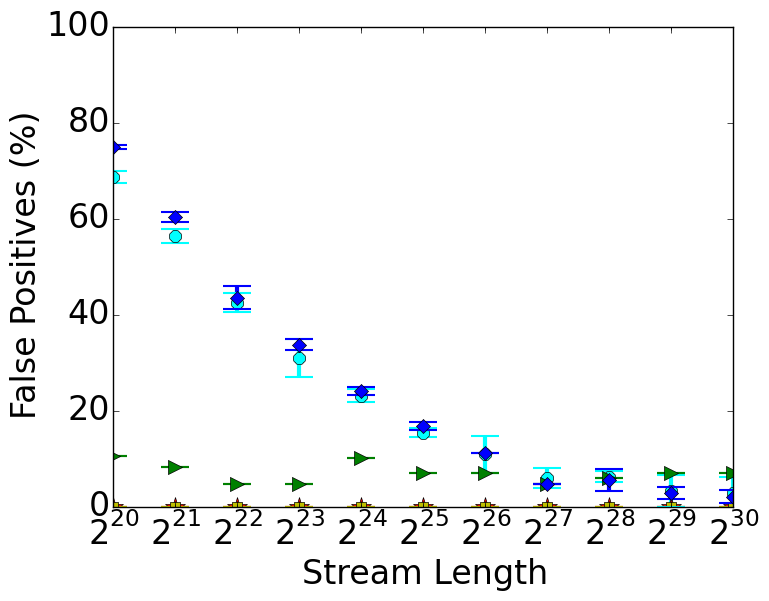}} &
		\subfloat[SanJose14\label{FPR:SJ:2Dbyte} - 2D Bytes]{\includegraphics[width = \thirdFigWidth]
			{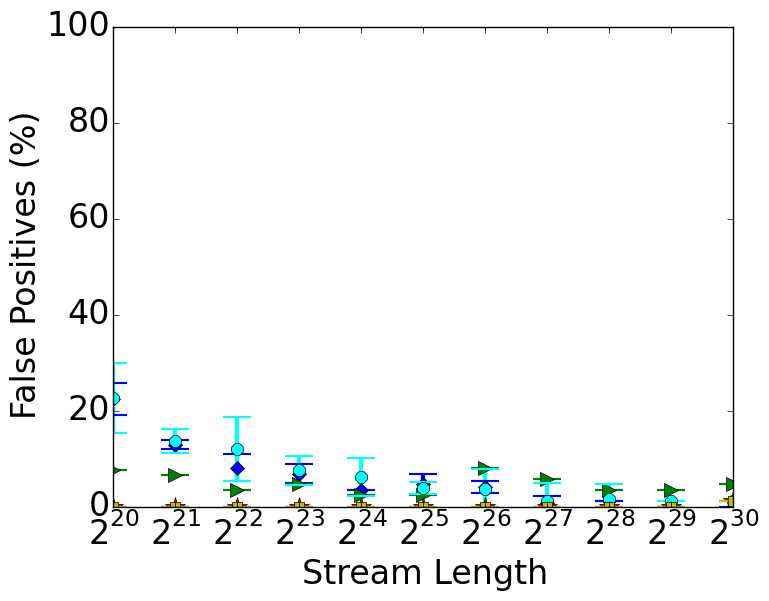}}
		\tabularnewline
		\multicolumn{3}{c}{\subfloat{\includegraphics[width = 17cm]
				{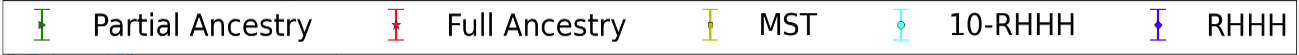}}}		
		\tabularnewline
		\addtocounter{subfigure}{-1}
		\subfloat[\label{FPR:CHI:byte}\label{FPR:`}Chicago16 - 1D Bytes]{\includegraphics[width = \thirdFigWidth]
			{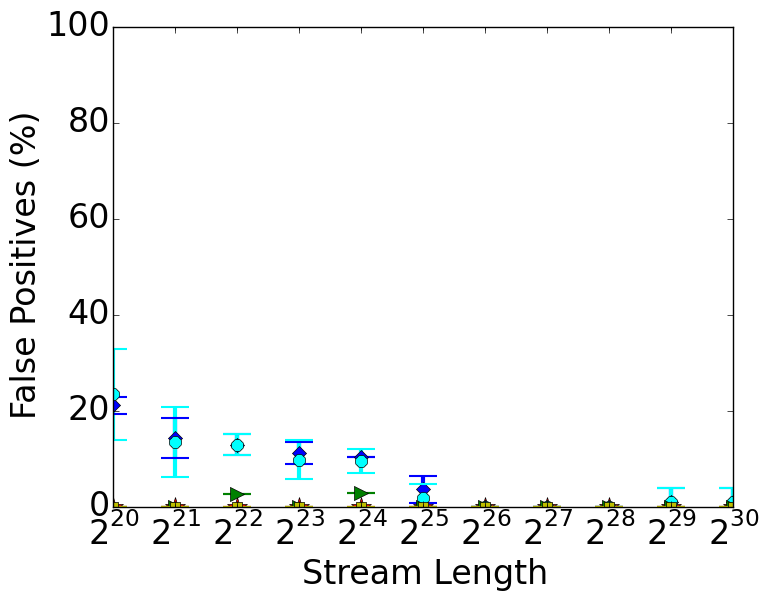}} &	
		\subfloat[\label{FPR:CHI:bit}Chicago16 - 1D Bits]{\includegraphics[width = \thirdFigWidth]
			{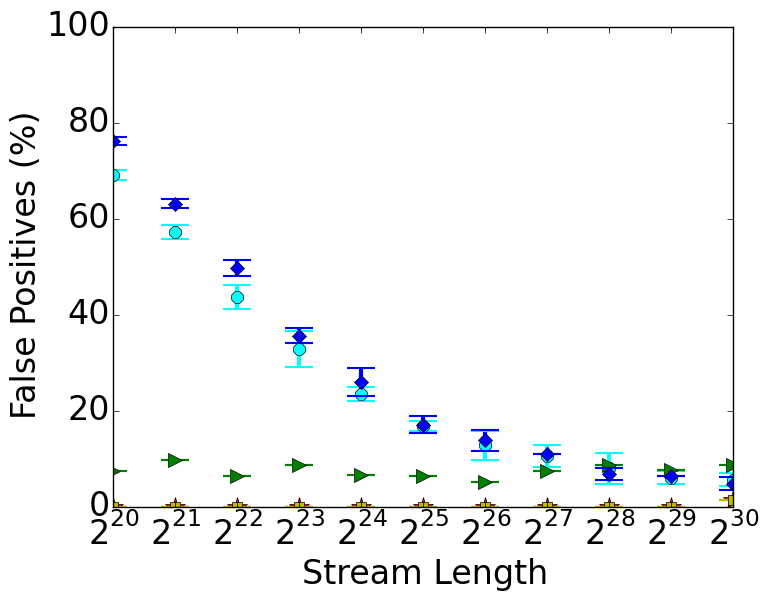}} &
		\subfloat[\label{FPR:CHI:2Dbyte}Chicago16 - 2D Bytes]{\includegraphics[width = \thirdFigWidth]
			{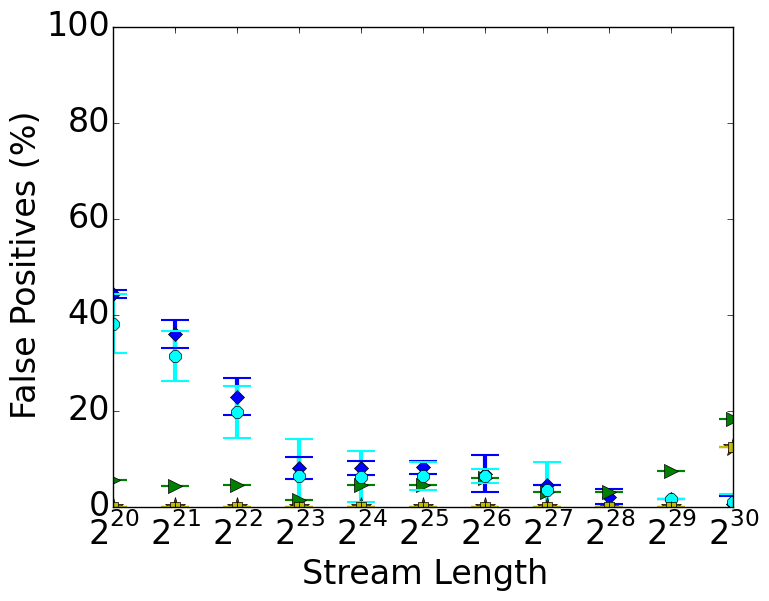}}
		
	\end{tabular}
	\caption{	\label{FPR:fig:FPR} False Positive Rate for different stream lengths.}
\end{figure*}

\begin{figure*}[t!]
	\begin{tabular}{ccc}
		\subfloat[\label{SJ:Byte}SanJose14 - 1D Bytes]{\includegraphics[width = \thirdFigWidth]
			{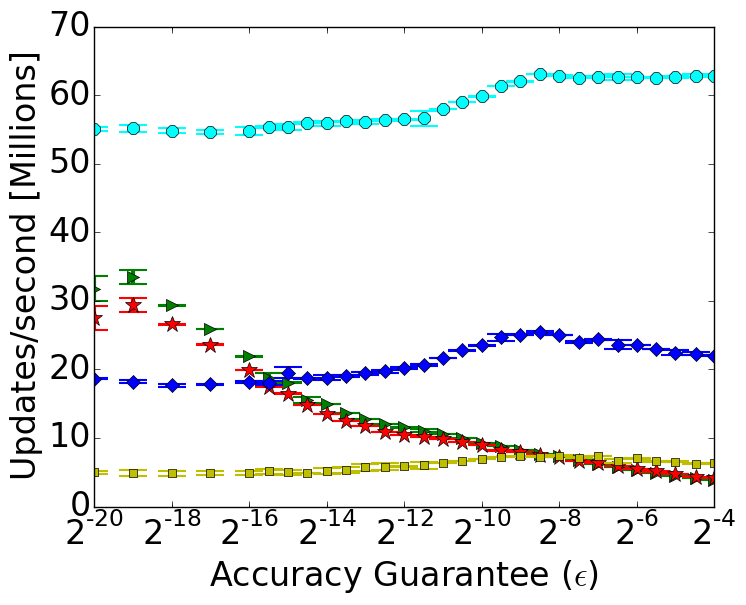}} &	
		\subfloat[SanJose14\label{SJ:bit} - 1D Bits]{\includegraphics[width = \thirdFigWidth]
			{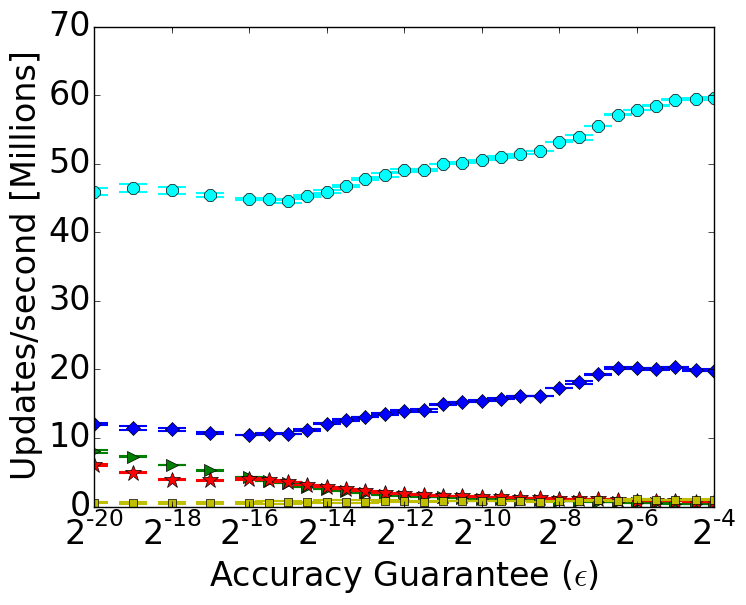}} &
		\subfloat[SanJose14\label{SJ:2Dbyte} - 2D Bytes]{\includegraphics[width = \thirdFigWidth]
			{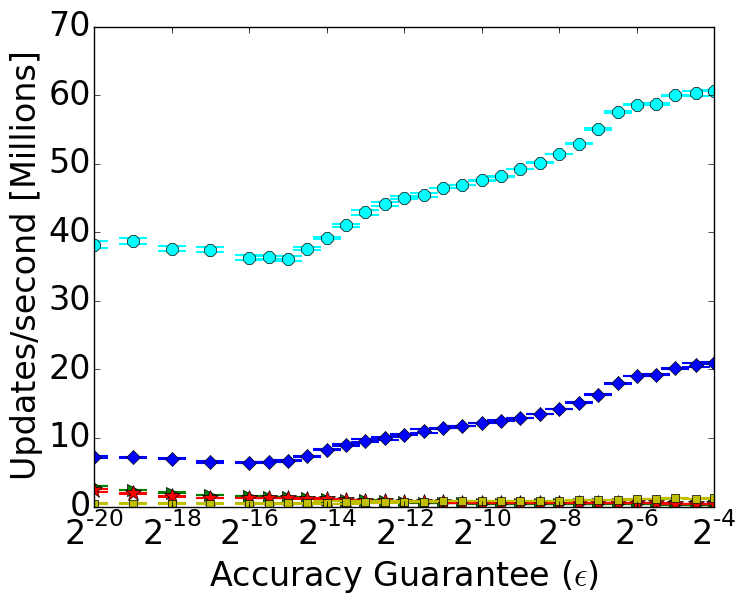}}
		\tabularnewline
		\multicolumn{3}{c}{\subfloat{\includegraphics[width = 17cm]
				{Bars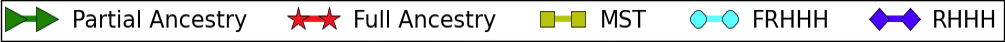}}}		
		\tabularnewline
		\addtocounter{subfigure}{-1}
		\subfloat[\label{CHI:byte}\label{`}Chicago16 - 1D Bytes]{\includegraphics[width = \thirdFigWidth]
			{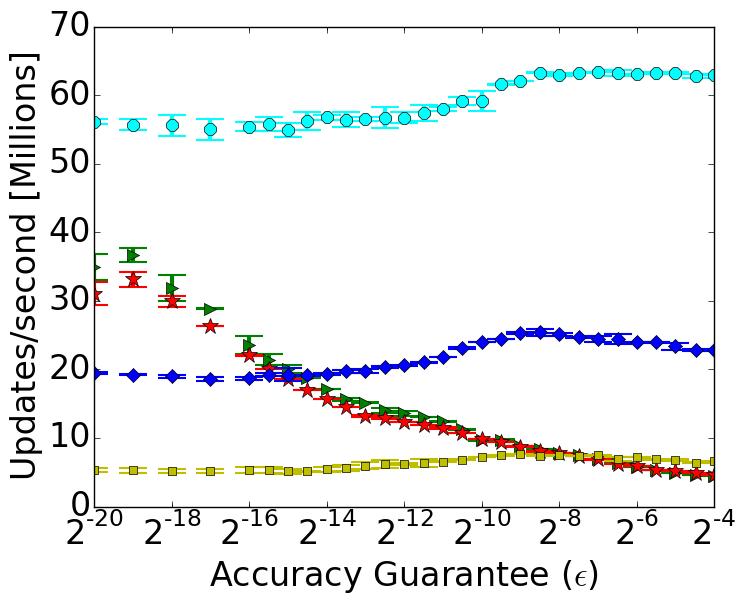}} &	
		\subfloat[\label{CHI:bit}Chicago16 - 1D Bits]{\includegraphics[width = \thirdFigWidth]
			{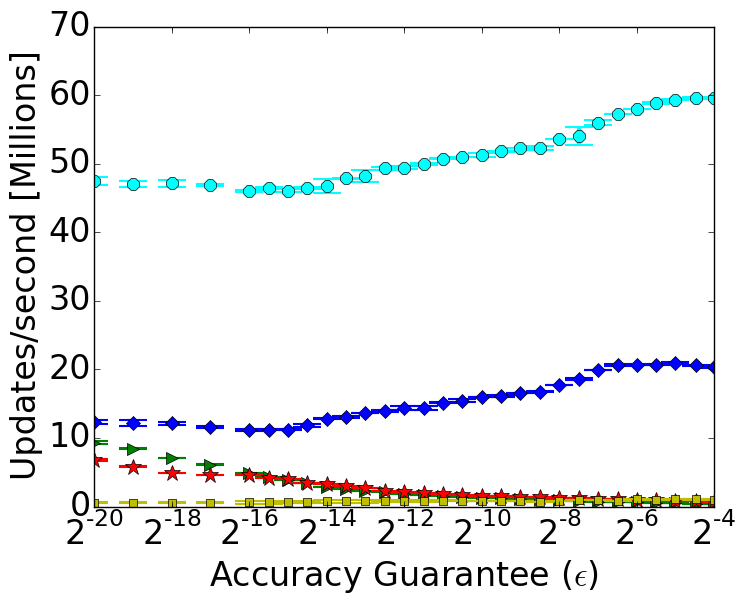}} &
		\subfloat[\label{CHI:2Dbyte}Chicago16 - 2D Bytes]{\includegraphics[width = \thirdFigWidth]
			{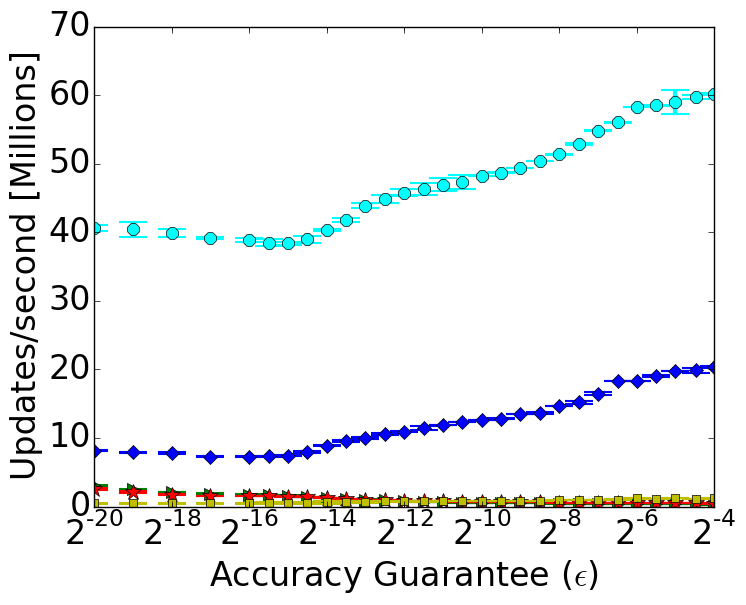}}
		
	\end{tabular}
	\caption{	\label{fig:speed} Update speed comparison for different hierarchical structures and workloads}
\end{figure*}


Our solution, \emph{Randomized HHH} (RHHH), updates \textbf{at most a single} randomly selected HH instance that operates in $O(1)$.
In contrast,~\cite{HHHMitzenmacher} updates \textbf{every} HH algorithm for each packet and thus operates in $O(H)$.

Specifically, for each packet, we randomize a number between 0 and $V$ and if it is smaller than $H$, we update the corresponding HH algorithm.
Otherwise, we ignore the packet.
Clearly, $V$ is a performance parameter: when $V=H$, every packet updates one of the HH algorithms whereas when $V\gg H$, most packets are ignored.
Intuitively, each HH algorithm receives a \emph{sample} of the stream.
We need to prove that given enough traffic, hierarchical heavy hitters can still be extracted.

Pseudocode of RHHH is given in Algorithm~\ref{alg:Skipper}.
RHHH uses the same algorithm for both one and two dimensions.
The differences between them are manifested in the $calcPred$ method.
Pseudocode of this method is found in Algorithm~\ref{alg:randHHH} for one dimension and in Algorithm~\ref{alg:randHHH2D} for two dimensions.

\begin{definition}
	The underlying estimation provides us with upper and lower estimates for the number of times prefix $p$ was updated ($X_p$).
    We denote: $\widehat{X^p}^{+}$ to be an upper bound for $X_p$ and $\widehat{X^p}^{-}$ to be a lower bound.
	For simplicity of notations, we define the following:\\
	$\widehat{f_p}\triangleq \widehat{X^p} V$ -- an estimator for $p$'s frequency.\\
	$\widehat{f_p^{+}}\triangleq \widehat{X^p}^{+} V$ -- an upper bound for $p$'s frequency.\\
	$\widehat{f_p^{-}}\triangleq \widehat{X^p}^{-} V$ -- a lower bound for $p$'s frequency.

\end{definition}
Note these bounds ignore the sample error that is accounted separately in the analysis.

The output method of RHHH starts with fully specified items and if their frequency is above $\theta N$, it adds them to $P$.
Then, RHHH iterates over their parent items and calculates a conservative estimation of their conditioned frequency with respect to $P$.
Conditioned frequency is calculated by an upper estimate to ($f_p^{+}$) amended by the output of the $calcPred$ method.
In a single dimension, we reduce the lower bounds of $p$'s closest predecessor HHHs.
In two dimensions, we use inclusion and exclusion principles to avoid double counting.
In addition, Algorithm~\ref{alg:randHHH2D} uses the notation of \emph{greater lower bound (glb)} that is formally defined in Definition~\ref{def:glb}.
Finally, we add a constant to the conditioned frequency to account for the sampling error.



\begin{definition}
	\label{def:glb}
	Denote $glb(h,h')$ the greatest lower bound of $h$ and $h'$.
	$glb(h,h')$ is a unique common descendant of $h$ and $h'$ s.t. $\forall p : \left(q\preceq p\right)\wedge \left(p \preceq h\right)\wedge \left(p \preceq h'\right) \Rightarrow p = q.$
	When $h$ and $h'$ have no common descendants, define $glb(h,h')$ as an item with count $0$.
\end{definition}

In two dimensions, $C_{p|P}$ is first set to be the upper bound on $p$'s frequency (Line~\ref{line:cp}, Algorithm~\ref{alg:Skipper}).
Then, we remove previously selected descendant heavy hitters (Line~\ref{alg:second}, Algorithm~\ref{alg:randHHH2D}).
Finally, we add back the common descendant (Line~\ref{alg:third}, Algorithm~\ref{alg:randHHH2D})).

Note that the work of~\cite{HHHMitzenmacher} showed that their structure extends to higher dimensions, with only a slight modification to the Output method to ensure that it conservatively estimates the conditioned count of each prefix.
As we use the same general structure, their extension applies in our case as well.

\newcommand{\FRHHH}{10-RHHH}
\section{Evaluation}
\label{sec:eval}

Our evaluation includes MST~\cite{HHHMitzenmacher}, the Partial and Full Ancestry~\cite{CormodeHHH} algorithms and two configurations of RHHH, one with $V=H$ (RHHH) and the other with $V=10\cdot H$ (10-RHHH).
RHHH performs a single update operation per packet while \FRHHH{} performs such an operation only for 10\% of the packets.
Thus, \FRHHH{} is considerably faster than RHHH but requires more traffic to converge.


The evaluation was performed on a single Dell 730 server running Ubuntu 16.04.01 release.
The server has 128GB of RAM and an Intel(R) Xeon(R) CPU E5-2667 v4 @ 3.20GHz processor.

Our evaluation includes four datasets, each containing a mix of 1 billion UDP/TCP and ICMP packets collected from major backbone routers in both Chicago~\cite{CAIDACH15,CAIDACH16} and San Jose~\cite{CAIDASJ13,CAIDASJ14} during the years 2014-2016.
We considered source hierarchies in byte (1D Bytes) and bit (1D Bits) granularities, as well as a source/destination byte hierarchy (2D Bytes).
Such hierarchies were also used by~\cite{HHHMitzenmacher,CormodeHHH}.
We ran each data point $5$ times and used two-sided Student's t-test to determine 95\% confidence intervals.

\subsection{Accuracy and Coverage Errors}
\label{sec:acc+cov-error}
RHHH has a small probability of both accuracy and coverage errors that are not present in previous algorithms.
Figure~\ref{fig:acc} quantifies the accuracy errors and Figure~\ref{fig:cov} quantifies the coverage errors.
As can be seen, RHHH becomes more accurate as the trace progresses.
Our theoretic bound ($\NB$ as derived in Section~\ref{sec:analysis} below) for these parameters is about 100 million packets for RHHH and about 1 billion packets for \FRHHH{}.
Indeed, these algorithms converge once they reach their theoretical bounds (see Theorem~\ref{thm:correctness}).



\subsection{False Positives}
Approximate HHH algorithms find all the HHH prefixes but they also return non HHH prefixes.
\emph{False positives} measure the ratio non HHH prefixes pose out of the returned HHH set.
Figure~\ref{FPR:fig:FPR} shows a comparative measurement of false positive ratios in the Chicago 16 and San Jose 14 traces. Every point was measured for $\epsilon=0.1\%$ and $\theta=1\%$.
As shown, for RHHH and \FRHHH{} the false positive ratio is reduced as the trace progresses.
Once the algorithms reach their theoretic grantees ($\NB$), the false positives are comparable to these of previous works.
In some cases, RHHH and \FRHHH{} even perform slightly better than the alternatives.



\subsection{Operation Speed}

Figure~\ref{fig:speed} shows a comparative evaluation of operation speed.
Figure~\ref{SJ:Byte}, Figure~\ref{SJ:bit} and Figure~\ref{SJ:2Dbyte} show the results of the San Jose 14 trace for 1D byte hierarchy ($H=5$), 1D bit hierarchy ($H=33$) and 2D byte hierarchy ($H=25$), respectively.
Similarly, Figure~\ref{CHI:byte}, Figure~\ref{CHI:bit} and Figure~\ref{CHI:2Dbyte} show results for the Chicago 16 trace on the same hierarchical domains.
Each point is computed for $250M$ long packet traces.
Clearly, the performance of RHHH and \FRHHH{} is relatively similar for a wide range of $\varepsilon$ values and for different data sets.
Existing works depend on $H$ and indeed run considerably slower for large $H$ values. 

Another interesting observation is that the Partial and Full Ancestry~\cite{CormodeHHH} algorithms improve when $\varepsilon$ is small.
This is because in that case there are few replacements in their trie based structure, as is directly evident by their $O(H\log(N\epsilon))$ update time, which is decreasing with $\epsilon$.
However, the effect is significantly lessened when $H$ is large.



RHHH and \FRHHH{} achieve speedup for a wide range of $\varepsilon$ values, while \FRHHH{} is the fastest algorithm overall.
For one dimensional byte level hierarchies, the achieved speedup is up to X3.5 for RHHH and up to X10 for \FRHHH{}.
For one dimensional bit level hierarchies, the achieved speedup is up to X21 for RHHH and up to X62 for \FRHHH{}.
Finally, for 2 dimensional byte hierarchies, the achieved speedup is up to X20 for RHHH and up to X60 for \FRHHH{}.
Evaluation on Chicago15 and SanJose13 yielded similar results, which are omitted due to lack of space.

\section{Virtual Switch Integration}
\label{sec:ovs}
This section describes how we extended Open vSwitch (OVS) to include approximate HHH monitoring capabilities. For completeness, we start with a short overview of OVS and then continue with our~evaluation.

\subsection{Open vSwitch Overview}
\label{apx:ovsOverview}

Virtual switching is a key building block in NFV environments, as it enables interconnecting multiple \emph{Virtual Network Functions} (VNFs) in service chains and enables the use of other routing technologies such as SDN. 
In practice, virtual switches rely on sophisticated optimizations to cope with the line rate.

Specifically, we target the DPDK version of OVS that enables the entire packet processing to be performed in user space. 
It mitigates overheads such as interrupts required to move from user space to kernel space. 
In addition, DPDK enables user space packet processing and provides direct access to NIC buffers without unnecessary memory copy. 
The DPDK library received significant engagement from the NFV industry~\cite{intelDpdk}.

%


%
The architectural design of OVS is composed of two main components: ovs-vswitchd and ovsdb-server. 
Due to space constraints, we only describe the vswitchd component. 
The interested reader is referred to \cite{ovs-2015-nsdi} for additional information.
The DPDK-version of the vswitchd module implements control and data planes in user space. 
Network packets ingress the datapath (dpif or dpif-netdev) either from a physical port connected to the physical NIC or from a virtual port connected to a remote host (e.g., a VNF). 
The datapath then parses the headers and determines the set of actions to be applied (e.g., forwarding or rewrite a specific~header).

\subsection{Open vSwitch Evaluation}

We examined two integration methods:
First, HHH measurement can be performed as part of the OVS dataplane.
That is, OVS updates each packet as part of its processing stage.
Second, HHH measurement can be performed in a separate virtual machine.
In that case, OVS forwards the relevant traffic to the virtual machine.
When RHHH operates with $V>H$, we only forward the sampled packets and thus reduce~overheads.

\subsubsection{OVS Environment Setup}
Our evaluation settings consist of two identical HP ProLiant servers with an Intel Xeon E3-1220v2 processor running at 3.1 Ghz with 8 GB RAM, an Intel 82599ES 10 Gbit/s network card and CentOS 7.2.1511 with Linux kernel 3.10.0 operating system.
The servers are directly connected through two physical interfaces.
We used Open vSwitch 2.5 with Intel DPDK 2.02, where NIC physical ports are attached using \emph{dpdk} ports.

One server is used as traffic generator while the other is used as \emph{Design Under Test (DUT)}.
Placed on the DUT, OVS receives packets on one network interface and then forwards them to the second one.
Traffic is generated using MoonGen traffic generator~\cite{MoonGen2015}, and we generate 1 billion UDP packets but preserve the source and destination IP as in the original dataset.
We also adjust the payload size to 64 bytes and reach 14.88 million packets per second (Mpps).


\begin{figure}[t]\centering
	\includegraphics[width = 0.8\columnwidth]{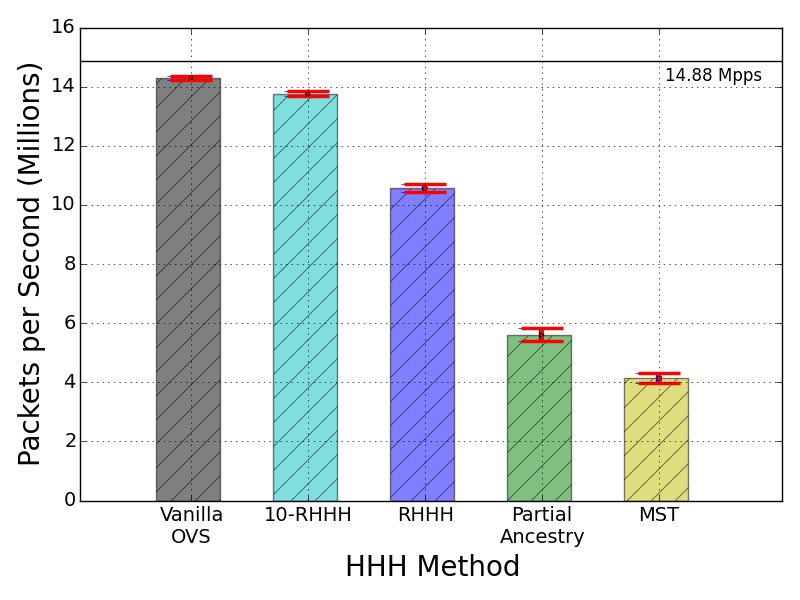}
	\caption{Throughput of dataplane implementations
		($\varepsilon = 0.001$, $\delta = 0.001$, 2D Bytes, Chicago 16).}
	\label{fig-bw1}
\end{figure}

\subsubsection{OVS Throughput Evaluation}
Figure \ref{fig-bw1} exhibits the throughput of OVS for dataplane implementations.
It includes our own \FRHHH{} (with V=10H) and RHHH (with V=H), as well as MST and Partial Ancestry.
Since we only have 10 Gbit/s links, the maximum achievable packet rate is 14.88~Mpps.

As can be seen, \FRHHH{} processes 13.8 Mpps, only 4\% lower than unmodified OVS.
RHHH achieves 10.6 Mpps, while the fastest competition is Partial Ancestry that delivers 5.6 Mpps.
Note that a 100 Gbit/s link delivering packets whose average size is 1KB only delivers $\approx$ 8.33 Mpps.
Thus, \FRHHH{} and RHHH can cope with the line speed.

Next, we evaluate the throughput for different $V$ values, from $V=H=25$ (RHHH) to $V=10\cdot H =250$ (\FRHHH{}).
Figure~\ref{fig:DPI} evaluates the dataplane implementation while Figure~\ref{fig:VMI} evaluates the distributed implementation.
In both figures, performance improves for larger $V$ value. 
In the distributed implementation, this speedup means that fewer packets are forwarded to the VM whereas in the dataplane implementation, it is linked to fewer processed~packets.
\begin{figure}[t]\centering
	{\includegraphics[width = 0.8\columnwidth]{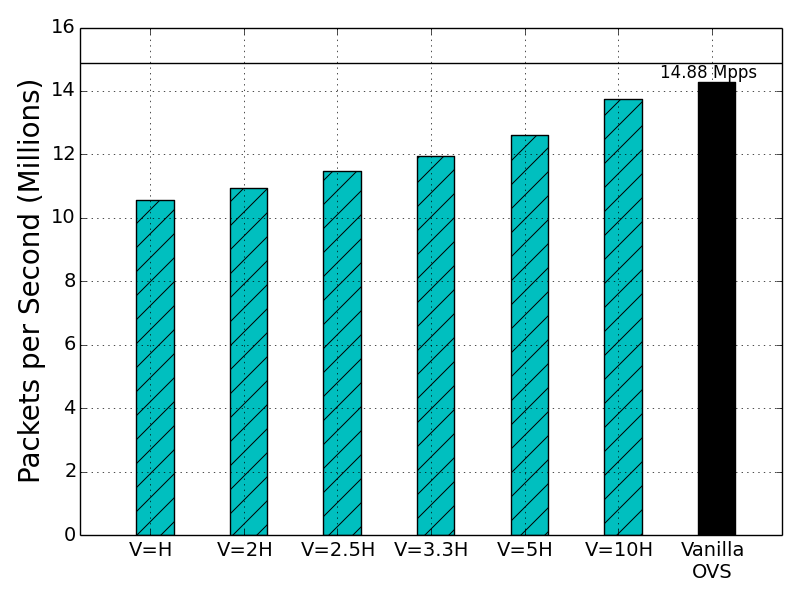}}
	\caption{Dataplane implementation\label{fig:DPI}}
\end{figure}
\begin{figure}[t]\centering
	{\includegraphics[width = 0.8\columnwidth]{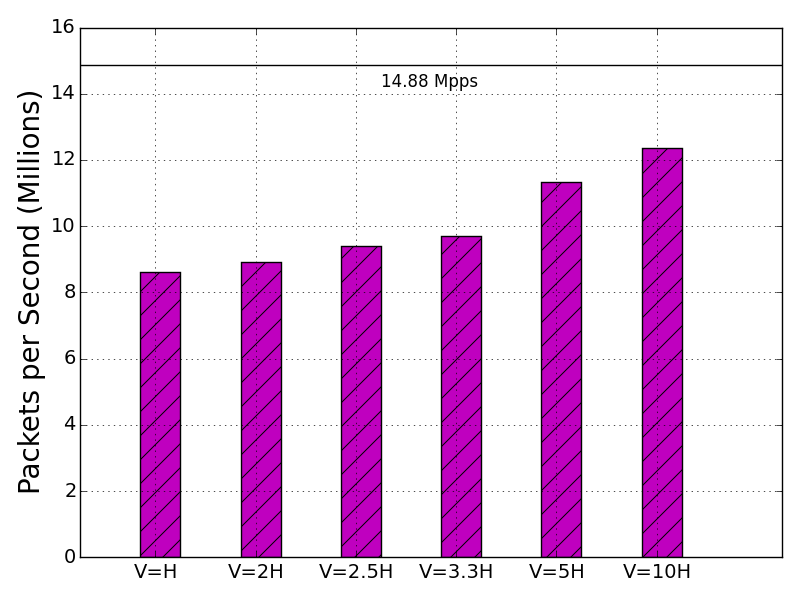}}
	\caption{Distributed implementation\label{fig:VMI}}
\end{figure}


Note that while the distributed implementation is somewhat slower, it enables the measurement machine to process traffic from multiple sources.

\section{Analysis}
\label{sec:analysis}
This section aims to prove that RHHH solves the {\sc$(\delta,\epsilon,\theta)-$approximate HHH} problem (Definition~\ref{def:deltaapproxHHH}) for one and two dimensional hierarchies.
Toward that end, Section~\ref{sec:analSamples} proves the accuracy requirement while Section~\ref{anal:randHHH} proves coverage.
Section~\ref{sec:RHHH-prop} proves that RHHH solves the {\sc$(\delta,\epsilon,\theta)-$approximate HHH} problem as well as its memory and update complexity.

We model the update procedure of RHHH as a balls and bins experiment where there are $V$ bins and $N$ balls.
Prior to each packet arrival, we place the ball in a bin that is selected uniformly at random.
The first $H$ bins contain an HH update action while the next $V-H$ bins are void.
When a ball is assigned to a bin, we either update the underlying HH algorithm with a prefix obtained from the packet's headers or ignore the packet if the bin is void.
Our first goal is to derive confidence intervals around the number of balls in a bin.

\begin{definition}
We define $X^{K}_i$ to be the random variable representing the number of balls from set $K$ in bin $i$, e.g., $K$ can be all packets that share a certain prefix, or a combination of multiple prefixes with a certain characteristic.
When the set $K$ contains all packets, we use the notation $X_i$.
\end{definition}

Random variables representing the number of balls in a bin are dependent on each other.
Therefore, we cannot apply common methods to create confidence intervals.
Formally, the dependence is manifested as:\\ $\sum\nolimits_1^{V} {{X_i}}  = N.$
This means that the number of balls in a certain bin is determined by the number of balls in all other bins.

Our approach is to approximate the balls and bins experiment with the corresponding Poisson one.
That is, analyze the Poisson case and derive confidence intervals and then use Lemma~\ref{lemma:rare} to derive a (weaker) result for the original balls and bins case.

We now formally define the corresponding Poisson model.
Let $Y_1^K,...,Y_{ V }^K$ s.t. $\{Y_i^K\} \sim Poisson\left( {\frac{K}{V}} \right)$ be \textbf{independent} Poisson random variables representing the number of balls in each bin from a set of balls $K$.
That is: $\{Y_i^K\} \sim Poisson\left( {\frac{K}{V}} \right).$

\begin{lemma}[Corollary 5.11, page 103 of~\cite{Mitzenmacher:2005:PCR:1076315}]
	\label{lemma:rare}
	Let $\mathfrak E$ be an event whose probability is either monotonically increasing or decreasing with the number of balls. If $\mathfrak E$ has probability $p$ in the Poisson case then $\mathfrak E$ has probability at most $2p$ in the exact case.
\end{lemma}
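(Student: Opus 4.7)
The plan is to derive the result via the classical Poisson-conditioning coupling between the exact and Poissonized balls-and-bins models, then exploit the monotonicity hypothesis on one tail at a time. First I would observe that if $Y_1,\ldots,Y_V$ are independent Poisson random variables with parameter $N/V$, then the conditional joint distribution of $(Y_1,\ldots,Y_V)$ given $\sum_i Y_i = N$ is exactly the multinomial distribution on $V$ bins with $N$ balls placed uniformly at random, i.e.\ the exact balls-and-bins distribution studied in the paper. Letting $p_k$ denote the probability of $\mathfrak{E}$ in the exact model with $k$ balls and setting $S = \sum_i Y_i \sim \mathrm{Poisson}(N)$, this conditioning identity immediately yields
\[
\Pr_{\text{Poisson}}[\mathfrak{E}] \;=\; \sum_{k=0}^{\infty} p_k \cdot \Pr[S = k].
\]

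Next I would use the monotonicity assumption. Suppose first that $p_k$ is non-decreasing in $k$. Then for every $k \ge N$ we have $p_k \ge p_N$, and restricting the sum above to the range $k \ge N$ gives
\[
\Pr_{\text{Poisson}}[\mathfrak{E}] \;\ge\; p_N \cdot \Pr[S \ge N].
\]
Provided $\Pr[S \ge N] \ge 1/2$, this rearranges to $p_N \le 2\,\Pr_{\text{Poisson}}[\mathfrak{E}]$, which is the desired bound in the exact model since $p_N$ is precisely the probability of $\mathfrak{E}$ there. The non-increasing case is completely symmetric: restrict instead to $k \le N$ and use the bound $\Pr[S \le N] \ge 1/2$.

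The one genuinely non-routine ingredient is therefore the median bound on $\mathrm{Poisson}(N)$, namely that $\Pr[S \ge N] \ge 1/2$ and $\Pr[S \le N] \ge 1/2$ simultaneously. This is the classical fact that the integer $N$ is a median of $\mathrm{Poisson}(N)$ (more generally, the median of $\mathrm{Poisson}(\lambda)$ lies within distance at most $\log 2$ of $\lambda$). I would cite this result rather than re-derive it; once it is in hand, the proof reduces to the one-line conditioning identity and the one-line monotonicity calculation above, which is why the statement appears as a corollary in the textbook reference. Aside from the median fact, there is no real obstacle.
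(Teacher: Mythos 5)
Your argument is correct: the Poissonization identity (conditioning the independent Poisson model on the total number of balls being $N$ recovers the exact balls-and-bins model), the restriction of the resulting sum to one tail via the monotonicity hypothesis, and the fact that the integer $N$ is a median of a $\mathrm{Poisson}(N)$ variable together yield exactly the factor-of-two bound, in both the increasing and decreasing cases. The paper does not prove this lemma itself --- it simply quotes it as Corollary 5.11 of~\cite{Mitzenmacher:2005:PCR:1076315} --- and your proof is essentially the standard textbook argument behind that corollary, so it matches the intended justification.
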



\subsection{Accuracy Analysis}
\label{sec:analSamples}

We now tackle the accuracy requirement from Definition~\ref{def:deltaapproxHHH}.
That is, for every HHH prefix ($p$), we need to~prove:  $$\Pr \left( {\left| {{f_p} - \widehat {{f_p}}} \right| \le \varepsilon N} \right) \ge 1 - \delta.$$

In RHHH, there are two distinct origins of error.
Some of the error comes from fluctuations in the number of balls per bin while the approximate HH algorithm is another source of error.

We start by quantifying the balls and bins error.
Let $Y^{p}_i$ be the Poisson variable corresponding to prefix $p$.
That is, the set $p$ contains all packets that are generalized by prefix $p$.
Recall that $f_p$ is the number of packets generalized by $p$ and therefore: $E(Y^{p}_i) = \frac{f_p}{V}.$

We need to show that with probability $1-\delta_s$,  $Y^{p}_i$ is within $\epsilon_s N$ from $E(Y^{p}_i)$.
Fortunately, confidence intervals for Poisson variables are a well studied~\cite{19WaysToPoisson} and we use the method of~\cite{Wmethod} that is quoted in
Lemma~\ref{lemma:poissonConfidence}.
\begin{lemma}
	\label{lemma:poissonConfidence}
	Let $X$ be a Poisson random variable, then
	\[\Pr \left( {\left| {X - E\left( X \right)} \right| \ge {Z_{1-\delta }}\sqrt {E\left( X \right)} } \right) \le \delta,\]
where $Z_\alpha$ is the $z$ value that satisfies $\phi(z)=\alpha$ and $\phi(z)$ is the density function of the normal distribution with mean $0$ and standard deviation of $1$.
\end{lemma}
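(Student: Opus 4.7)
The plan is to establish this confidence interval by reducing to the Central Limit Theorem for the Poisson family. If $X \sim \mathrm{Poisson}(\lambda)$ with $\lambda = E(X)$, then the mean and variance of $X$ both equal $\lambda$, so the standardized random variable $Y \triangleq (X-\lambda)/\sqrt{\lambda}$ has mean zero and unit variance. The key observation is that a Poisson$(\lambda)$ variable can be decomposed as a sum of $\lfloor\lambda\rfloor$ i.i.d.\ Poisson$(1)$ variables plus a Poisson$(\lambda-\lfloor\lambda\rfloor)$ remainder, so as $\lambda$ grows the distribution of $Y$ converges to the standard normal $\mathcal N(0,1)$.

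The proof would then proceed in three short steps. First, rewrite the target event in standardized form: $|X - E(X)| \ge Z_{1-\delta}\sqrt{E(X)}$ is equivalent to $|Y| \ge Z_{1-\delta}$. Second, apply the defining property of the quantile $Z_{1-\delta}$ of the standard normal, which yields $\Pr(|\mathcal N(0,1)| \ge Z_{1-\delta}) \le \delta$. Third, transfer this tail inequality from $\mathcal N(0,1)$ to $Y$ via the normal approximation.

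The main obstacle is the third step: the CLT gives convergence in distribution, not a finite-sample inequality, so strictly speaking one needs a non-asymptotic bound on the discrepancy between $Y$ and $\mathcal N(0,1)$. Two routes are available. The first is to invoke a Berry--Esseen type bound for the Poisson family, which controls the Kolmogorov distance by $O(\lambda^{-1/2})$ and gives a rigorous inequality with a vanishing slack term. The second, which matches the treatment in the cited W-method paper, is simply to take the normal approximation as the working definition of the Wald-style interval, in which case the lemma becomes a direct restatement of a classical Poisson confidence construction. Since the paper explicitly attributes the bound to \cite{Wmethod} and uses it as a black box, my proposal is to follow the second route: state the standardization, quote the Wald Poisson interval, and note that all downstream applications of this lemma in Section~\ref{sec:analSamples} operate in the regime $\lambda = f_p/V \cdot V = f_p$ with $f_p$ large, where the approximation is known to be tight enough for the stated confidence $1-\delta$.
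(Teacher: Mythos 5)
Your proposal matches the paper's treatment: the paper never proves this lemma at all---it is quoted verbatim from the cited Wald-method reference and used as a black box---which is exactly the route you settle on after sketching the CLT justification. Your candid observation that the bound is a normal-approximation (Wald-type) Poisson interval rather than an exact finite-sample inequality is accurate, and your Berry--Esseen suggestion is indeed what a self-contained rigorous version would require, so there is no gap relative to what the paper itself establishes.
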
 Lemma~\ref{lemma:poissonConfidence}, provides us with a confidence interval for Poisson variables, and enables us to tackle the main accuracy result.
\begin{theorem}
	\label{thm:pusmain}
	If $N \ge {Z_{1 - \frac{{{\delta _s}}}{2}}}V{\varepsilon_s}^{ - 2}$ then
	\[\Pr \left( {\left| {{X_i}^pH - {f_p}} \right| \ge {\varepsilon _s}N} \right) \le {\delta _s}.\]
\end{theorem}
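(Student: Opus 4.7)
The plan is to move from the exact balls-and-bins model to the independent Poisson model defined just before the theorem, apply the Poisson concentration bound of Lemma~\ref{lemma:poissonConfidence} to a single bin, translate the resulting tail event into the form $|X_i^p V - f_p|\ge \varepsilon_s N$, and then transfer the bound back to the exact model via the ``rare events'' Lemma~\ref{lemma:rare}, paying a factor of $2$ that is absorbed by starting with confidence $\delta_s/2$.

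Concretely, in the Poisson proxy $Y_i^p\sim\mathrm{Poisson}(f_p/V)$ so that $E[Y_i^p]=f_p/V$. Lemma~\ref{lemma:poissonConfidence} applied at confidence $\delta_s/2$ gives
\[
\Pr\!\left(\left|Y_i^p-\tfrac{f_p}{V}\right|\ge Z_{1-\delta_s/2}\sqrt{\tfrac{f_p}{V}}\right)\le \tfrac{\delta_s}{2}.
\]
Rescaling by $V$ inside the absolute value turns this into
\[
\Pr\!\left(\left|Y_i^p\, V-f_p\right|\ge Z_{1-\delta_s/2}\sqrt{f_p V}\right)\le \tfrac{\delta_s}{2}.
\]
Since $f_p\le N$ we have $\sqrt{f_p V}\le \sqrt{NV}$, and the hypothesis $N\ge Z_{1-\delta_s/2} V\varepsilon_s^{-2}$ (read as the bound needed to make $Z_{1-\delta_s/2}\sqrt{NV}\le \varepsilon_s N$) then yields
\[
\Pr\!\left(\left|Y_i^p\, V-f_p\right|\ge \varepsilon_s N\right)\le \tfrac{\delta_s}{2}
\]
in the Poisson model.

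Finally, I would invoke Lemma~\ref{lemma:rare} to cross back to the exact model. The event $\left\{\left|X_i^p\, V - f_p\right|\ge \varepsilon_s N\right\}$ must be shown to be monotone in the number of balls, after which Lemma~\ref{lemma:rare} inflates the Poisson probability by at most a factor of two, giving the desired $\delta_s$ bound. The main obstacle is precisely this monotonicity check: the deviation $|X_i^p V - f_p|$ involves both the random quantity $X_i^p$ (which only grows as balls arrive) and the ambient constant $f_p$, so the argument needs to be framed as a one-sided tail and combined with a matching lower-tail bound via a union bound, each half being monotone in the appropriate direction. Once this is in place, the remaining computation is routine algebra around Lemma~\ref{lemma:poissonConfidence}.
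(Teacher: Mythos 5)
Your proposal follows essentially the same route as the paper's proof: apply Lemma~\ref{lemma:poissonConfidence} to $Y_i^p$ at confidence $\delta_s/2$, bound $f_p\le N$, use the hypothesis on $N$ to turn the threshold into $\varepsilon_s N$, rescale by $V$, and transfer to the exact model via Lemma~\ref{lemma:rare} at the cost of a factor $2$. The monotonicity concern you raise (splitting the two-sided deviation into one-sided tails) is a fair point that the paper glosses over by simply asserting monotonicity of $Y_i^p$, but it does not change the substance of the argument.
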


\begin{proof}
We use Lemma~\ref{lemma:poissonConfidence} for $\frac{\delta_s}{2}$ and get:
\[\Pr \left( \left| {{Y_i}^p - \frac{{{f_p}}}{V}} \right| \ge {Z_{1 - \frac{\delta _s}{2}}\sqrt {\frac{{{f_p}}}{V}} } \right) \le \frac{\delta_s}{2} .\]
To make this useful, we trivially bind $f_p \le N$ and get
\[\Pr \left( \left| {{Y_i}^p - \frac{{{f_p}}}{V}} \right| \ge {Z_{1 - \frac{\delta _s}{2}}\sqrt {\frac{{{N}}}{V}} } \right) \le \frac{\delta_s}{2} .\]

However, we require error of the form $\frac{\epsilon_s \cdot N}{V}$.
\[\begin{array}{l}
{\varepsilon _s}N{V^{ - 1}} \ge {Z_{1 - \frac{{{\delta _s}}}{2}}}{V^{ - 0.5}}{N^{0.5}}\\
{N^{0.5}} \ge {Z_{1 - \frac{{{\delta _s}}}{2}}}{V^{0.5}}{\varepsilon _s}^{ - 1}\\
N \ge {Z_{1 - \frac{{{\delta _s}}}{2}}}V{\varepsilon_s}^{ - 2} .
\end{array}\]
Therefore, when $N \ge {Z_{1 - \frac{{{\delta _s}}}{2}}}V{\varepsilon_s}^{ - 2}$,  we have that:
\[\Pr \left( {\left| {{Y_i}^p - \frac{{{f_p}}}{V}} \right| \ge \frac{{{\varepsilon _s}N}}{V}} \right) \le \frac{{{\delta _s}}}{2} .\]
We multiply by $V$ and get:
$$\Pr \left( {\left| {{Y_i}^pV - {f_p}} \right| \ge {\varepsilon _s}N} \right) \le \frac{{{\delta _s}}}{2} .$$
Finally, since $Y_i^{p}$ is monotonically increasing with the number of balls ($f_p$), we apply Lemma~\ref{lemma:rare} to conclude that\\
$$\Pr \left( {\left| {{X_i}^pV - {f_p}} \right| \ge {\varepsilon _s}N} \right) \le {\delta _s}.$$
\end{proof}
To reduce clutter, we denote $\NB \triangleq \NBound$.
Theorem~\ref{thm:pusmain} proves that the desired sample accuracy is achieved once $N>\NB$.

It is sometimes useful to know what happens when $N<\NB$.
For this case, we have Corollary~\ref{cor:epsN}, which is easily derived from Theorem~\ref{thm:pusmain}.
We use the notation $\varepsilon_s(N)$ to define the actual sampling error after $N$ packets.
Thus, it assures us that when $N<\NB$, $\varepsilon_s(N)>\varepsilon_s$.
It also shows that $\varepsilon_s(N)<\varepsilon_s$ when $N>\NB$.
Another application of Corollary~\ref{cor:epsN} is that given a measurement interval $N$, we can derive a value for $\varepsilon_s$ that assures correctness.
For simplicity, we continue with the notion of $\varepsilon_s$.

\begin{corollary}
	\label{cor:epsN}
	 ${\varepsilon _s}\left( N \right) \ge \sqrt {\frac{{{Z_{1 - \frac{{{\delta _s}}}{2}}}V}}{N}} .$
\end{corollary}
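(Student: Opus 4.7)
The plan is to obtain the corollary as a direct algebraic inversion of the threshold condition in Theorem~\ref{thm:pusmain}. That theorem says that whenever $N \ge Z_{1 - \delta_s/2}\, V\, \varepsilon_s^{-2}$, the sample error is at most $\varepsilon_s N$ with failure probability at most $\delta_s$. Interpreting $\varepsilon_s(N)$ as the smallest admissible error parameter at stream length $N$ (i.e., the value that makes the hypothesis of the theorem tight), I would simply solve the hypothesis for $\varepsilon_s$.

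Concretely, first I would recall from Theorem~\ref{thm:pusmain} the sufficient condition
\[
N \;\ge\; Z_{1 - \frac{\delta_s}{2}}\, V\, \varepsilon_s^{-2}.
\]
Then I would rearrange: multiply both sides by $\varepsilon_s^{2}/N$ to get $\varepsilon_s^{2} \ge Z_{1 - \delta_s/2}\, V / N$, and take square roots (legal because all quantities are positive) to obtain
\[
\varepsilon_s \;\ge\; \sqrt{\frac{Z_{1 - \delta_s/2}\, V}{N}}.
\]
Since this is precisely the smallest $\varepsilon_s$ compatible with the theorem's guarantee at length $N$, it equals (or lower-bounds) $\varepsilon_s(N)$, yielding the claim.

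There is no real obstacle here; the content is a one-line algebraic manipulation together with the monotonicity observations the corollary is used for. The only thing worth being careful about is the interpretation of $\varepsilon_s(N)$: I would explicitly define it as the infimum over all $\varepsilon_s$ for which Theorem~\ref{thm:pusmain}'s hypothesis is satisfied at the given $N$ and $\delta_s$, so that the inequality $\varepsilon_s(N) \ge \sqrt{Z_{1 - \delta_s/2}\,V/N}$ is well-posed. The comparison statements the authors make immediately after the corollary (that $\varepsilon_s(N) > \varepsilon_s$ when $N < \NB$ and $\varepsilon_s(N) < \varepsilon_s$ when $N > \NB$) then follow from monotonicity of $\sqrt{Z_{1-\delta_s/2} V/N}$ in $N$ combined with the definition of $\NB$.
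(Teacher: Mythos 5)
Your derivation is correct and is exactly the paper's (implicit) argument: the corollary is obtained by rearranging the hypothesis $N \ge Z_{1-\frac{\delta_s}{2}} V \varepsilon_s^{-2}$ of Theorem~\ref{thm:pusmain} into $\varepsilon_s \ge \sqrt{Z_{1-\frac{\delta_s}{2}} V / N}$, with $\varepsilon_s(N)$ read as the smallest error parameter for which the theorem's guarantee holds at stream length $N$. Your explicit remark on how to make $\varepsilon_s(N)$ well-posed (as an infimum over admissible $\varepsilon_s$) is a useful clarification of what the paper leaves informal, but it does not change the substance of the argument.
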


The error of approximate HH algorithms is proportional to the number of updates.
Therefore, our next step is to provide a bound on the number of updates of an arbitrary HH algorithm.
Given such a bound, we configure the algorithm to compensate so that the accumulated error remains within the guarantee even if the number of updates is larger than average.

\begin{corollary}
	\label{cor:oversample}
	Consider the number of updates for a certain lattice node ($X_i$).
	If $N>\NB$, then \[\Pr \left( {{X_i} \le \frac{N}{V}\left( {1 + {\varepsilon _s}} \right)} \right) \ge 1 - {\delta _s}.\]
\end{corollary}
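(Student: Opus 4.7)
The plan is to obtain Corollary~\ref{cor:oversample} as an essentially immediate specialization of Theorem~\ref{thm:pusmain}. First, I would observe that $X_i$, the total number of balls landing in bin $i$, is exactly $X_i^{p}$ in the notation of Theorem~\ref{thm:pusmain} if we take the set $p$ to be all packets in the stream; equivalently, $f_p = N$. So Theorem~\ref{thm:pusmain} directly yields, whenever $N \ge \NB$, the two-sided bound
\[
\Pr\!\left(\left|X_i V - N\right| \ge \varepsilon_s N\right) \le \delta_s.
\]

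Next, since the corollary only asserts an upper tail, I would drop the absolute value to get the one-sided implication
\[
\Pr\!\left(X_i V - N \ge \varepsilon_s N\right) \le \delta_s,
\]
then rearrange to the form $\Pr(X_i \ge \tfrac{N}{V}(1+\varepsilon_s)) \le \delta_s$, and finally complement it to obtain $\Pr(X_i \le \tfrac{N}{V}(1+\varepsilon_s)) \ge 1 - \delta_s$, which is the claimed statement.

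The only subtlety worth flagging is whether it is legitimate to instantiate Theorem~\ref{thm:pusmain} with ``all packets'' as the set $p$. This works because the proof of the theorem only used the crude bound $f_p \le N$, so the argument is actually tightest precisely in the case $f_p = N$; no reworking of the Poissonization or of Lemma~\ref{lemma:rare} is needed, since the event $\{X_i \ge t\}$ is still monotone in the number of balls and so the transfer from the Poisson experiment to the exact balls-and-bins experiment goes through unchanged. There is no real obstacle here — the corollary is a direct restatement of the one-sided half of the theorem applied to the ``unrestricted'' prefix, and the main thing to get right is merely the algebraic rearrangement from $|X_i V - N| \le \varepsilon_s N$ to $X_i \le \tfrac{N}{V}(1+\varepsilon_s)$.
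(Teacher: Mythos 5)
Your proposal is correct and matches the paper's own argument: the paper likewise derives the corollary by applying Theorem~\ref{thm:pusmain} (with the set of all packets, so $f_p=N$), keeping only the upper tail of the two-sided bound, and rearranging to $\Pr\left(X_i \le \frac{N}{V}(1+\varepsilon_s)\right) \ge 1-\delta_s$. Your extra remark that instantiating with the full stream is legitimate (since the theorem's proof already bounds $f_p \le N$) is a fair clarification of a step the paper leaves implicit, but it is not a different route.
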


\begin{proof}
	 We use Theorem~\ref{thm:pusmain} and get: \\
$\Pr \left( {\left| {{X_i} - \frac{N}{V}} \right| \ge {\varepsilon _s}N} \right) \le {\delta _s}.$
This implies that:\\
$\Pr \left( {{X_i} \le \frac{N}{V}\left( {1 + {\varepsilon _s}} \right)} \right) \ge 1 - {\delta _s},$
completing the proof.
\end{proof}

We explain now how to configure our algorithm to defend against situations in which a given approximate HH algorithm might get too many updates, a phenomenon we call \emph{over sample}.
Corollary~\ref{cor:oversample} bounds the probability for such an occurrence, and hence we can slightly increase the accuracy so that in the case of an over sample, we are still within the desired limit.
We use an algorithm ($\mathbb A$) that solves the {\sc {$(\varepsilon_a, \delta_a)$ - Frequency Estimation}} problem.
We define $\varepsilon_a' \triangleq \frac{\varepsilon_a}{1+\varepsilon_s}$.
According to Corollary~\ref{cor:oversample}, with probability $1-\delta_s$, the number of sampled packets is at most $(1+\varepsilon_s)\frac{N}{V}.$
By using the union bound and with probability $1-\delta_a-\delta_s$ we get:
\[\left| {{X^p} - \widehat {{X^p}}} \right| \le {\varepsilon _{a'}}\left( {1 + {\varepsilon _s}} \right)\frac{N}{V} = \frac{{{\varepsilon _a}\left( {1 + {\varepsilon _s}} \right)}}{{1 + {\varepsilon _s}}}\frac{N}{V} = {\varepsilon _a}\frac{N}{V}.\]
For example, Space Saving requires $1,000$ counters for $\epsilon_a=0.001$.
If we set $\epsilon_s = 0.001$, we now require $1001$ counters.
Hereafter, we assume that the algorithm is configured to accommodate these over samples.

\begin{theorem}
	\label{thm:PUSCombined}
	Consider an algorithm ($\mathbb{A}$) that solves the {\sc {$(\epsilon_a, \delta_a)$ - Frequency Estimation}} problem.
	If $N > \NB$, then for  $\delta \ge \delta_a + 2 \cdot \delta_s$ and $\epsilon \ge \epsilon_a + \epsilon_s$, $\mathbb{A}$ solves {\sc {$(\epsilon, \delta)$ - Frequency Estimation}}.
\end{theorem}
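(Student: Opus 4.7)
The plan is to decompose the total estimation error into a sampling component and an algorithmic component, bound each separately, and combine via a union bound. First I would start from the telescoping identity $\widehat{f_p} - f_p = V(\widehat{X^p} - X^p) + (V X^p - f_p)$ and apply the triangle inequality to get $|f_p - \widehat{f_p}| \le V|\widehat{X^p} - X^p| + |V X^p - f_p|$. The second term is the error introduced by the balls-and-bins sampling that drives RHHH, and the first term is the error contributed by $\mathbb{A}$ running on the sampled sub-stream routed to the bin associated with $p$.

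For the sampling term I would directly invoke Theorem~\ref{thm:pusmain}: since $N > \NB$, we have $\Pr(|V X^p - f_p| \ge \epsilon_s N) \le \delta_s$. For the algorithmic term I would reuse the over-sample reasoning laid out immediately before the theorem: Corollary~\ref{cor:oversample} says that $X^p \le (1+\epsilon_s)N/V$ except with probability $\delta_s$, and on that event $\mathbb{A}$, configured with $\epsilon_{a'} \triangleq \epsilon_a/(1+\epsilon_s)$, produces $\widehat{X^p}$ with $|\widehat{X^p} - X^p| \le \epsilon_{a'}\cdot(1+\epsilon_s)N/V = \epsilon_a N/V$ except with probability $\delta_a$. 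A union bound on these two bad events gives $V|\widehat{X^p} - X^p| \le \epsilon_a N$ with probability at least $1 - \delta_a - \delta_s$.

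A final union bound with the sampling-error event yields $|f_p - \widehat{f_p}| \le (\epsilon_a + \epsilon_s)N \le \epsilon N$ with probability at least $1 - (\delta_a + 2\delta_s) \ge 1 - \delta$, which is precisely the $(\epsilon,\delta)$-Frequency Estimation guarantee.

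The main obstacle I expect is bookkeeping the three failure events cleanly: the sampling deviation $\{|VX^p - f_p| \ge \epsilon_s N\}$, the over-sample event $\{X^p > (1+\epsilon_s)N/V\}$, and $\mathbb{A}$'s internal failure given the realized update count. The first two are both statements about the balls-and-bins randomness and each costs a separate $\delta_s$, which is exactly why the theorem charges $2\delta_s$ rather than $\delta_s$. I must also make sure that $\mathbb{A}$'s $(\epsilon_a,\delta_a)$-Frequency Estimation guarantee is applied \emph{conditionally} on the realized number of updates not exceeding $(1+\epsilon_s)N/V$, so that multiplying its per-update error bound by that cap (and not by $N$) is justified; this is where the re-parameterization $\epsilon_{a'} = \epsilon_a/(1+\epsilon_s)$ does its work and rescales the algorithmic error back to an $\epsilon_a N$ additive guarantee on the frequency scale.
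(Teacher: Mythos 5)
Your proof is correct and follows essentially the same route as the paper's: both decompose the error into the sampling deviation $|VX^p - f_p|$ (bounded via Theorem~\ref{thm:pusmain}) and the algorithmic error $V|\widehat{X^p} - X^p|$ (bounded via Corollary~\ref{cor:oversample} together with the $\epsilon_a' = \epsilon_a/(1+\epsilon_s)$ reconfiguration), then combine with a union bound to reach $\delta_a + 2\delta_s$. Your explicit bookkeeping of the three failure events matches the paper's argument, with the only cosmetic difference being that the paper bounds the node's total update count $X_i$ where you write $X^p$.
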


\begin{proof}
	As $N > \NB$, we use Theorem~\ref{thm:pusmain}.
	That is, the input solves {\sc {$(\epsilon, \delta)$ - Frequency Estimation}}.
	\begin{equation}
	\label{eq:delta2}
	\Pr \left[ {\left| {{f_p} - {X_p}V} \right| \ge {\varepsilon _s}N} \right] \le {\delta _s}.
	\end{equation}
	
	$\mathbb{A}$ solves the {\sc {$(\epsilon_a, \delta_a)$ - Frequency Estimation}} problem and provides us with an estimator $\widehat{X^p}$ that approximates $X^p$ --  the number of updates for prefix $p$.
	According to Corollary~\ref{cor:oversample}:
	\[\Pr \left( {\left| {{X^p} - \widehat {{X^p}}} \right| \le \frac{{{\varepsilon _a}N}}{V}} \right) \ge 1 - {\delta _a} - {\delta _s},\]
	and multiplying both sides by $V$ gives us:
	\begin{equation}
	\label{eq:nodelta2}
\Pr \left( {\left| {{X^p}V - \widehat {{X^p}}V} \right| \ge {\varepsilon _a}N} \right) \le {\delta _a} + {\delta _s}.
	\end{equation}
	We need to prove that: $\Pr \left( {\left| {{f_p} - \widehat {{X^p}}V} \right| \le \varepsilon N} \right) \ge 1 - \delta$.
	Recall that: $f_p = E(X^p)V$ and that $\widehat{f_p} = \widehat{X^p}V$ is the estimated frequency of $p$.
	Thus,
	\small
	\begin{align}
	&\Pr \left( {\left| {{f_p} - \widehat{f_p}} \right| \ge \varepsilon N} \right) = \Pr \left( {\left| {{f_p} - \widehat {{X^p}}V} \right| \ge \varepsilon N} \right)\notag\\
	=& \Pr \left( {\left| {{f_p} + \left( {{X^p}{V} - {X^p}{V}} \right) - {V}\widehat {{X^p}}} \right| \ge (\epsilon_a+\epsilon_s) N} \right)\label{eq:separation}
	\\ \le&\Pr \left( \left[{\left| {{f_p} - {X^p}{V}} \right| \ge {\varepsilon _s}N} \right]\vee  \left[{\left| {{X^p}{V} - \widehat {{X^p}}}{V} \right| \ge {\varepsilon _a}N}\right] \right)\notag,
	\end{align}\normalsize
	where the last inequality follows from the fact that in order for the error of~\eqref{eq:separation} to exceed $\epsilon N$, at least one of the events has to occur.
	We bound this expression using the Union bound.
\[\begin{array}{l}
\Pr \left( {\left| {{f_p} - \widehat {{f_p}}} \right| \ge \varepsilon N} \right) \le \\
\Pr \left( {\left| {{f_p} - {X^p}V} \right| \ge {\varepsilon _s}N} \right) + \Pr \left( {\left| {{X^p}V - \widehat {{X^p}}H} \right| \ge {\varepsilon _a}N} \right)  \\
\le{\delta _a} + 2{\delta _s},
\end{array}\]
where the last inequality is due to equations~\ref{eq:delta2} and~\ref{eq:nodelta2}.
\end{proof}

An immediate observation is that Theorem~\ref{thm:PUSCombined} implies accuracy, as it guarantees that with probability $1-\delta$ the estimated frequency of any prefix is within $\varepsilon N$ of the real frequency while the accuracy requirement only requires it for prefixes that are selected as HHH.

\begin{lemma}
	\label{lemma:accuracy}
	If $N > \NB$, then Algorithm~\ref{alg:Skipper} satisfies the accuracy constraint for $\delta = \delta_a+2\delta_s$ and $\epsilon = \epsilon_a+\epsilon_s$.
\end{lemma}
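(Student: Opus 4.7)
My plan is to observe that Lemma~\ref{lemma:accuracy} is essentially a direct corollary of Theorem~\ref{thm:PUSCombined}, together with the structural observation that the accuracy requirement of Definition~\ref{def:deltaapproxHHH} is a \emph{per-prefix} statement that Theorem~\ref{thm:PUSCombined} already establishes universally over prefixes. So the work is mostly in connecting these two statements cleanly rather than in any new probabilistic argument.

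Concretely, the first step is to unpack what Algorithm~\ref{alg:Skipper} uses as its frequency estimator for a prefix $p$: it maintains one counter-based HH instance per lattice node (assumed to solve the $(\epsilon_a,\delta_a)$-Frequency Estimation problem), and it outputs $\widehat{f_p}=\widehat{X^p}\cdot V$. Thus the estimator is precisely the object analyzed in Theorem~\ref{thm:PUSCombined}. Moreover, by the configuration step preceding Theorem~\ref{thm:PUSCombined} (the ``over sample'' adjustment that replaces $\epsilon_a$ by $\epsilon_a'=\epsilon_a/(1+\epsilon_s)$ in the sub-algorithm), the underlying HH instance is guaranteed to respect an $\epsilon_a \cdot N/V$ error bound even when it receives up to $(1+\epsilon_s)N/V$ updates, which is exactly the regime handled by Corollary~\ref{cor:oversample}.

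The second step is to invoke Theorem~\ref{thm:PUSCombined} with $\delta=\delta_a+2\delta_s$ and $\epsilon=\epsilon_a+\epsilon_s$. Since $N>\NB$ by hypothesis, the theorem yields, for every prefix $p$,
\[
\Pr\!\left(\left|f_p-\widehat{f_p}\right|\le \epsilon N\right)\ge 1-\delta.
\]
The third (and essentially trivial) step is to note that Definition~\ref{def:deltaapproxHHH} only demands this inequality for each $p\in P$, where $P$ is the set returned by the Output procedure. Since the bound above holds uniformly over all prefixes and $P$ is a subset of the prefix lattice, the accuracy condition holds in particular for every $p\in P$. This completes the argument.

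The only mild subtlety, and the place I would be most careful, is ensuring that the randomness governing $P$ (which prefixes the algorithm chooses to output) does not invalidate the per-prefix bound. But this is not actually a problem: the bound from Theorem~\ref{thm:PUSCombined} is stated for a fixed prefix $p$ and holds with probability $1-\delta$ regardless of any events defining $P$, because $P$ is a deterministic function of the same internal state that determines $\widehat{f_p}$, not a separate source of randomness. Hence no union bound over prefixes or conditioning argument is needed, and the lemma follows.
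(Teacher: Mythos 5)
Your proposal is correct and follows essentially the same route as the paper: the paper's proof is a one-line appeal to Theorem~\ref{thm:PUSCombined} (noting that each prefix's estimate comes from a single HH instance), preceded by exactly the observation you make that the theorem bounds the error for \emph{every} prefix while the accuracy requirement only concerns prefixes in the returned set $P$. Your additional remarks on the over-sample configuration and on the randomness of $P$ are consistent with the paper's treatment and add no divergence in approach.
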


\begin{proof}
	The proof follows from Theorem~\ref{thm:PUSCombined}, as the frequency estimation of a prefix depends on a single HH~algorithm.
\end{proof}

\subsubsection*{Multiple Updates}
One might consider how RHHH behaves if instead of updating at most $1$ HH instance, we update $r$ independent instances. This implies that we may update the same instance more than once per packet. 
Such an extension is easy to do and still provides the required guarantees. Intuitively, this variant of the algorithm is what one would get if each packet is duplicated $r$ times. The following corollary shows that this makes RHHH converge $r$ times faster. 
\begin{corollary}
	Consider an algorithm similar to $RHHH$ with $V=H$, but for each packet we perform $r$ independent update operations. 
	If $N > \frac{\NB}{r}$, then this algorithm satisfies the accuracy constraint for $\delta = \delta_a+2\delta_s$ and $\epsilon = \epsilon_a+\epsilon_s$.
\end{corollary}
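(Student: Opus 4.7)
The plan is to reduce the multi-update variant to the single-update analysis already developed for RHHH by observing that performing $r$ independent update operations per packet is stochastically indistinguishable from running the original algorithm (with $V=H$) on a stream of length $rN$. Every step of Section~\ref{sec:analSamples} then goes through after the substitution $N \mapsto rN$, and the bound $N \ge \NB/r$ appears as soon as we demand that the \emph{per-packet} error be at most $\epsilon_s N$.

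Concretely, I would first redo Theorem~\ref{thm:pusmain} in the new setting. In the corresponding Poisson model, the number of updates landing in prefix $p$'s bin that concern $p$ is now $Y_i^p \sim \mathrm{Poisson}(r f_p / V)$, since the total number of ball placements is $rN$ rather than $N$. Applying Lemma~\ref{lemma:poissonConfidence} with confidence $\delta_s/2$ and bounding $f_p \le N$ gives
\[\Pr\left(\left| Y_i^p - \frac{r f_p}{V} \right| \ge Z_{1-\delta_s/2} \sqrt{\frac{rN}{V}}\right) \le \frac{\delta_s}{2}.\]

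Next I would observe that the natural unbiased estimator for $f_p$ changes from $\widehat{X^p}\,V$ to $\widehat{X^p}\,V/r$, because each real packet now contributes $r$ updates. Requiring additive error $\epsilon_s N$ on $f_p$ therefore translates to requiring error $\epsilon_s rN / V$ on $Y_i^p$. Solving the resulting inequality $\epsilon_s rN/V \ge Z_{1-\delta_s/2} \sqrt{rN/V}$ yields exactly $N \ge \NB / r$, matching the hypothesis of the corollary.

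Finally, I would apply Lemma~\ref{lemma:rare} to move back to the exact balls-and-bins distribution (absorbing the factor of $2$ into $\delta_s$ as in Theorem~\ref{thm:pusmain}), and then reuse Corollary~\ref{cor:oversample}, Theorem~\ref{thm:PUSCombined}, and Lemma~\ref{lemma:accuracy} verbatim on the expanded stream of length $rN$. The main obstacle is a subtle independence check: the $r$ updates triggered by a single packet are correlated in the sense that they are all generated by the same arrival. However, because the statement explicitly specifies that these updates are \emph{independent} uniform draws over the $V$ bins, the joint distribution of bin occupancies is identical to that produced by $rN$ independently sampled packets. Once this observation is in place, the Poisson-approximation step is justified and the rest of the argument is a mechanical rescaling of the proof of Lemma~\ref{lemma:accuracy}.
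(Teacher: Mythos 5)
Your proposal is correct and follows essentially the same route as the paper: the paper likewise reduces the $r$-update variant to running RHHH on the stream in which each packet is duplicated $r$ times, and then invokes Lemma~\ref{lemma:accuracy} so that convergence occurs after $N > \NB/r$ original packets. Your additional rescaling of the Poisson mean and of the estimator to $\widehat{X^p}V/r$ is just a more explicit spelling-out of the same reduction.
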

\begin{proof}
	Observe that the new algorithm is identical to running RHHH on a stream ($\mathcal{S'}$) where each packet in $\mathcal{S}$ is replaced by $r$ consecutive packets.
    Thus, Lemma~\ref{lemma:accuracy} guarantees that accuracy is achieved for $\mathcal{S'}$ after $\NB$ packets are processed.
    That is, it is achieved for the original stream ($\mathcal{S}$) after $N >\frac{\NB}{r}$ packets.
\end{proof}

\subsection{Coverage Analysis}
\label{anal:randHHH}

Our goal is to prove the coverage property of Definition~\ref{def:deltaapproxHHH}.
 That is:
$\Pr \left( \widehat {C_{q|P}} \ge C_{q|P} \right) \ge 1-\delta.$
Conditioned frequencies are calculated in a different manner for one and two dimensions.
Thus, Section~\ref{subsec:one} deals with one dimension and Section~\ref{subsec:two} with two.

We now present a common definition of the best generalized prefixes in a set.
\begin{definition}[Best generalization]
	\label{def:bestG}
	Define $G(q|P)$  as the set $\left\{ {p:p \in P,p \prec q,\neg \exists p' \in P:q \prec p' \prec p} \right\}$.
	Intuitively, $G(q|P)$ is the set of prefixes that are best generalized by $q$.
	That is, $q$ does not generalize any prefix that generalizes one of the prefixes in $G(q|P)$.
\end{definition}

\subsubsection{One Dimension}
\label{subsec:one}
We use the following lemma for bounding the error of our conditioned count estimates.
\begin{lemma}
	\label{lemma:cp}(\cite{HHHMitzenmacher})
	In one dimension, $${C_{q\mid P}} = {f_q} - \sum\nolimits_{h \in G(q|P)} {{f_h}} .$$
	\normalsize
\end{lemma}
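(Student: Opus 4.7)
My plan is to reduce the claim to a disjoint-union decomposition of $H_q\cap H_P$. Starting from the definition $C_{q|P}=\sum_{e\in H_{P\cup\{q\}}\setminus H_P}f_e$ and the identity $H_{P\cup\{q\}}=H_P\cup H_q$, I first rewrite
\[
C_{q|P}=\sum_{e\in H_q\setminus H_P}f_e \;=\; f_q-\sum_{e\in H_q\cap H_P}f_e.
\]
So it suffices to establish $\sum_{e\in H_q\cap H_P}f_e=\sum_{h\in G(q|P)}f_h$, which I will prove by exhibiting $\{H_g:g\in G(q|P)\}$ as a disjoint cover of $H_q\cap H_P$.

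For the covering claim I will argue in two steps. First, in the setting where the lemma is invoked, the output loop of Algorithm~\ref{alg:Skipper} processes prefixes from most to least specific, so at the moment $q$ is considered, $P$ contains no ancestor of $q$. Consequently any $p\in P$ with $H_p\cap H_q\neq\emptyset$ must satisfy $p\prec q$, and for such $p$ we have $H_p\subseteq H_q$, giving $H_q\cap H_P=\bigcup_{p\in P,\,p\prec q}H_p$. Second, I will collapse this union onto $G(q|P)$: given any $p\in P$ with $p\prec q$, the prefixes $x$ with $p\preceq x\prec q$ form a chain in one dimension (they are the prefixes of $p$ of length strictly greater than $|q|$), so I can pick the element of $P$ on that chain that is closest to $q$; by construction this element lies in $G(q|P)$ and dominates $p$, hence $H_p\subseteq\bigcup_{g\in G(q|P)}H_g$.

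The one step that genuinely uses the one-dimensional hypothesis is the disjointness of $\{H_g:g\in G(q|P)\}$, and this is the main obstacle. My planned argument is: if distinct $g_1,g_2\in G(q|P)$ shared a fully specified descendant $e$, then both $g_1$ and $g_2$ would be prefixes of the string $e$, forcing $g_1$ and $g_2$ to be comparable under $\preceq$ (in one dimension the prefixes of a fixed string are totally ordered). Without loss of generality $g_1\prec g_2\prec q$, contradicting the maximality clause $\nexists h'\in P$ with $g_1\prec h'\prec q$ in the definition of $G(q|P)$. Disjointness together with the covering equality yields $\sum_{e\in H_q\cap H_P}f_e=\sum_{g\in G(q|P)}f_g$, closing the proof.

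This disjointness is also exactly the step that breaks in two dimensions, where two elements of $G(q|P)$ can have a common greatest lower bound that is itself a descendant of $q$; that is precisely why Algorithm~\ref{alg:randHHH2D} augments the calculation with an inclusion--exclusion correction. The remainder of the proof is a routine unfolding of definitions.
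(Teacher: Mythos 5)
Your proof is correct. Note that the paper itself gives no proof of this lemma---it is quoted from~\cite{HHHMitzenmacher}---so there is no in-paper argument to compare against; your write-up supplies the missing derivation, and it is the natural one: reduce $C_{q\mid P}$ to $f_q-\sum_{e\in H_q\cap H_P}f_e$ and show that $\left\{H_g : g\in G(q\mid P)\right\}$ is a disjoint cover of $H_q\cap H_P$. The point most worth highlighting is that the identity as literally stated fails for arbitrary $P$ (if $P$ contained $q$ or a strict ancestor of $q$, the left side would be $0$ while the right side need not be), so an implicit hypothesis that no element of $P$ generalizes $q$ is required; you correctly spotted this and discharged it via the bottom-up processing order of the output loop of Algorithm~\ref{alg:Skipper}, and in one dimension prefixes at the same level as $q$ that were added earlier have descendant sets disjoint from $H_q$, so they cause no trouble either. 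Your chain-plus-maximality argument for the covering step and the comparability argument for disjointness both invoke the one-dimensional structure exactly where it is needed, and your closing observation correctly identifies disjointness as the step that breaks in two dimensions, which is precisely what the inclusion--exclusion correction of Algorithm~\ref{alg:randHHH2D} compensates for.
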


Using Lemma~\ref{lemma:cp}, it is easier to establish that the conditioned frequency estimates calculated by Algorithm~\ref{alg:Skipper} are conservative.

\begin{lemma}
\label{lemma:sq}
The conditioned frequency estimation of Algorithm~\ref{alg:Skipper} is:
\[\widehat{C_{q|P}} = \widehat{f_q}^{+}-\sum\nolimits_{h \in G\left( {q|P} \right)} {\widehat{f_h}^- } +  2{Z_{1 - \delta }}\sqrt {N V} .\]
\end{lemma}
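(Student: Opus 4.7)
The plan is to verify the identity by direct inspection of Algorithm~\ref{alg:Skipper} and Algorithm~\ref{alg:randHHH}, simply tracing the running value of $\widehat{C_{q|P}}$ through the lines where it is assigned and updated. Nothing probabilistic is needed here: the statement is purely about what the algorithm outputs as its conditioned-frequency estimator, and the content of earlier lemmas (in particular Lemma~\ref{lemma:cp}) enters only to motivate why this particular expression is the natural conservative analogue of the true $C_{q|P}$.

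First, I would invoke Algorithm~\ref{alg:Skipper}, line~\ref{line:cp}, which initializes $\widehat{C_{q|P}} \leftarrow \widehat{f_q}^{+} + \mathrm{calcPred}(q,P)$. Next, I would substitute the value returned by Algorithm~\ref{alg:randHHH}: starting from $R=0$ and subtracting $\widehat{f_h}^{-}$ for every $h\in G(q|P)$ gives $\mathrm{calcPred}(q,P) = -\sum_{h\in G(q|P)} \widehat{f_h}^{-}$. Finally, line~\ref{line:accSample} of Algorithm~\ref{alg:Skipper} adds the sampling-error correction $2Z_{1-\delta}\sqrt{NV}$. Combining these three contributions yields exactly
\[
\widehat{C_{q|P}} \;=\; \widehat{f_q}^{+}\;-\;\sum_{h\in G(q|P)} \widehat{f_h}^{-}\;+\;2Z_{1-\delta}\sqrt{NV},
\]
as claimed.

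As a sanity check, I would compare this against Lemma~\ref{lemma:cp}, which expresses the true one-dimensional conditioned frequency as $C_{q|P} = f_q - \sum_{h\in G(q|P)} f_h$. The algorithm's expression is obtained from this identity by replacing $f_q$ with its upper bound $\widehat{f_q}^{+}$ and each $f_h$ with its lower bound $\widehat{f_h}^{-}$, so that under the corresponding good events every term moves $\widehat{C_{q|P}}$ upward relative to $C_{q|P}$; the additive $2Z_{1-\delta}\sqrt{NV}$ term absorbs the fluctuation introduced by the Poissonization step analyzed in Section~\ref{sec:analSamples}.

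No real obstacle arises in this lemma itself: the only care required is to match indices between $\mathrm{calcPred}$'s loop variable and the summation set $G(q|P)$, and to ensure that the $+$ and $-$ signs in Algorithm~\ref{alg:randHHH} are read correctly. The genuine work (showing that with high probability this conservative estimator actually upper-bounds $C_{q|P}$, and hence that coverage holds) belongs to the lemmas that follow and will rely on Theorem~\ref{thm:PUSCombined} together with a union bound over $G(q|P)$.
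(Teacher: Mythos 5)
Your proof is correct and follows the same route as the paper: read off line~\ref{line:cp} of Algorithm~\ref{alg:Skipper}, substitute the value returned by $calcPred$ in Algorithm~\ref{alg:randHHH} (namely $-\sum_{h\in G(q|P)}\widehat{f_h}^{-}$), and account for the additive term from line~\ref{line:accSample}. The sanity check against Lemma~\ref{lemma:cp} and the deferral of the probabilistic content to the later coverage argument match the paper's treatment as well.
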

\begin{proof}
Looking at Line~\ref{line:cp} in Algorithm~\ref{alg:Skipper}, we get that: $$\widehat{C_{q|P}} = \widehat{f_q}^{+} + calcPred(q,P).$$
That is, we need to verify that the return value $calcPred(q,P)$ in one dimension (Algorithm~\ref{alg:randHHH}) is $\sum\nolimits_{h \in G\left( {q|P} \right)} {\widehat{f_h}^- }$.
This follows naturally from that algorithm.
Finally, the addition of $2{Z_{1 - \delta }}\sqrt {NV}$ is due to line~\ref{line:accSample}.
\end{proof}

In deterministic settings, $\widehat{f_q}^{+} -\sum\nolimits_{h \in G\left( {q|P} \right)} {\widehat{f_h}^- }$ is a conservative estimate since ${\widehat {{f_q}}^ + } \ge {f_q}$ and $f_h < \widehat{f_h}^-$.
In our case, these are only true with regard to the sampled sub-stream and the addition of $2{Z_{1 - \delta }}\sqrt {NV}$ is intended to compensate for the randomized process.


Our goal is to show that $\Pr \left(\widehat {C_{q|P}} > C_{q|P}\right) \ge 1-\delta$.
That is, the conditioned frequency estimation of Algorithm~\ref{alg:Skipper} is probabilistically conservative.



\begin{theorem}
	\label{thm:underCP}
$\Pr \left( \widehat {C_{q|P}} \ge C_{q|P} \right) \ge 1-\delta.$
\end{theorem}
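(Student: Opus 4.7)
The plan is to combine the deterministic conservatism of the underlying heavy-hitters algorithm with a probabilistic bound on the sampling fluctuation, transferred from the Poisson model to the balls-and-bins model via Lemma~\ref{lemma:rare}.

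First I would unpack $\widehat{C_{q|P}}$ using Lemma~\ref{lemma:sq} and invoke the deterministic side of the underlying HH algorithm to write
\[
\widehat{C_{q|P}} \;\ge\; X_q V - \sum_{h\in G(q|P)} X_h V + 2 Z_{1-\delta}\sqrt{NV},
\]
since $\widehat{f_q}^{+} \ge X_q V$ and $\widehat{f_h}^{-} \le X_h V$ whenever the HH algorithm's guarantees hold. Hence it suffices to prove that, with probability at least $1-\delta$,
\[
\Bigl(X_q V - \sum_{h} X_h V\Bigr) - C_{q|P} \;\ge\; -\,2Z_{1-\delta}\sqrt{NV}.
\]

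Second, I would switch to the Poisson model, in which the counts $Y_q$ and $\{Y_h\}_{h\in G(q|P)}$ are \emph{independent} Poisson variables with means $f_q/V$ and $f_h/V$ respectively, because they correspond to distinct prefix/bin pairs across different lattice nodes; the sum $\sum_h Y_h$ is then itself Poisson with mean at most $f_q/V \le N/V$ (since the $H_h$ are disjoint subtrees inside $H_q$ by the definition of $G(q|P)$). Applying Lemma~\ref{lemma:poissonConfidence} separately to $Y_q$ and to $\sum_h Y_h$, combined with the crude bound $f_p \le N$ and a union bound, yields
\[
\Bigl|\,\bigl(Y_q V - \textstyle\sum_h Y_h V\bigr) - C_{q|P}\,\Bigr| \;\le\; 2 Z_{1-\delta}\sqrt{NV}
\]
with high probability in the Poisson world. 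I would then appeal to Lemma~\ref{lemma:rare} to port this deviation bound back to the balls-and-bins model, absorbing the factor-of-two blow-up in the failure probability, after verifying that the relevant tail event's probability is monotone in the number of balls (which is inherited from the monotonicity of the underlying Poisson tail).

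The main obstacle will be the careful accounting of the probability budget $\delta$ among (i) the two one-sided Poisson tails combined via the union bound, and (ii) the doubling incurred when lifting from the Poisson model to the balls-and-bins model: the constant $2 Z_{1-\delta}\sqrt{NV}$ built into Algorithm~\ref{alg:Skipper} must absorb both sources of slack simultaneously. Once that allocation is fixed, the final inequality follows by substituting the bound on $X_q V - \sum_h X_h V$ into the lower bound for $\widehat{C_{q|P}}$; the extra additive term in Line~\ref{line:accSample} exactly cancels the Poisson fluctuation, leaving $\widehat{C_{q|P}} \ge C_{q|P}$ as required.
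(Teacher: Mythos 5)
Your proposal follows essentially the same route as the paper's proof: decompose the sampling fluctuation of $\widehat{C_{q|P}}$ into a positively-contributing part (the $\widehat{f_q}^{+}$ term) and a negatively-contributing part (the $\sum_{h\in G(q|P)}\widehat{f_h}^{-}$ term), bound each via Lemma~\ref{lemma:poissonConfidence} with mean at most $N/V$, take a union bound, and transfer to the exact balls-and-bins model through the monotonicity condition of Lemma~\ref{lemma:rare}, with the additive term of Line~\ref{line:accSample} absorbing the deviation. The one loose end you flag---how the $\delta$ budget is split so that $2Z_{1-\delta}\sqrt{NV}$ covers both one-sided tails plus the factor-of-two from Lemma~\ref{lemma:rare}---is the same bookkeeping the paper itself glosses over (its proof silently works with $Z_{1-\frac{\delta}{8}}$ while the algorithm adds $2Z_{1-\delta}\sqrt{NV}$), so it is not a gap specific to your argument.
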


\begin{proof}
	Recall that:  $$\widehat {{C_{q|P}}} = \widehat f_q^ +  - \sum\limits_{h \in G\left( {q|P} \right)} {\widehat f_h^ - + 2{Z_{1-\frac{\delta }{8}}}\sqrt {NV} }.$$

	We denote by $K$ the set of packets that may affect $\widehat {{C_{q|P}}}$.
	We split $K$ into two sets: $K^{+}$ contains the packets that may positively impact $\widehat {{C_{q|P}}}$ and $K^{-}$ contains the packets that may negatively impact it.
	
	 We use $K^{+}$ to estimate the sample error in $\widehat{ f_q}$ and $K^{-}$ to estimate the sample error in $\sum\limits_{h \in G\left( {q|P} \right)} {\widehat f_h^ -}$.
	 The positive part is easy to estimate.
	 In the negative, we do not know exactly how many bins affect the sum.
	 However, we know for sure that there are at most $N$.
	 We define the random variable $Y^K_+$ that indicates the number of balls included in the positive sum.
	  We  invoke Lemma~\ref{lemma:poissonConfidence} on $Y^{K}_+$.
	  For the negative part, the conditioned frequency is positive so $E\left(Y^K_-\right)$ is at most $\frac{N}{V}$. Hence,
	$\Pr \left( \left| {Y_K^ +  - E\left( {Y_K^ + } \right)} \right| \ge {Z_{1-\frac{\delta }{8}}}\sqrt \frac{N }{V} \right) \le \frac{\delta }{4}.$
	Similarly, we use Lemma~\ref{lemma:poissonConfidence} to bound the error of $Y_K^ -$:
	$$\Pr \left( {\left| {Y_K^ -  - E\left( {{Y_K}^ - } \right)} \right| \ge {Z_{1-\frac{\delta }{8}}}\sqrt  \frac{N}{V} } \right) \le \frac{\delta }{4}.$$\\
	$Y^{K}_+$ is monotonically increasing with any ball and $Y_{K}^-$ is monotonically decreasing with any ball.
	Therefore,  we can apply Lemma~\ref{lemma:rare} on each of them and conclude:
\[\begin{array}{l}
\Pr \left( {\widehat {{C_{q|P}}} \ge {C_{q|P}}} \right)\le\\
 2\Pr \left( {H\left( {Y_K^ -  + Y_K^ + } \right)   \ge VE\left( {Y_K^ -  + Y_K^ + } \right) + 2{Z_{1 - \frac{\delta }{8}}}\sqrt {NV} } \right)\\
\le 1-2\frac{\delta }{2} = 1-\delta.
\end{array}\]
\end{proof}
\normalsize
\begin{theorem}
	If $N > \NB$, Algorithm~\ref{alg:Skipper} solves the {\sc$(\delta, \varepsilon, \theta)$ - Approximate HHH} problem for $\delta = \delta_a + 2\delta_s$ and $\varepsilon = \varepsilon_s + \varepsilon_a$.
\end{theorem}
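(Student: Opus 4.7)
The plan is to verify the two requirements of Definition~\ref{def:deltaapproxHHH} separately, invoking the building blocks already established in Sections~\ref{sec:analSamples} and~\ref{anal:randHHH}. Essentially no new probabilistic estimate is needed; the theorem is a packaging of the accuracy and coverage lemmas, so the proof should be short.

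For the accuracy clause, note that the estimator $\widehat{f_p}$ produced for any reported prefix $p\in P$ is computed from a single underlying HH instance fed by the Poisson-sampled sub-stream assigned to the corresponding lattice cell. This is exactly the setting of Lemma~\ref{lemma:accuracy}: provided $N > \NB$, for every prefix $p$ we have $\Pr\bigl(|f_p - \widehat{f_p}| \le \varepsilon N\bigr) \ge 1-\delta$ with $\delta = \delta_a + 2\delta_s$ and $\varepsilon = \varepsilon_a + \varepsilon_s$. Since this holds for every prefix, it holds in particular for every $p \in P$, which is all that Definition~\ref{def:deltaapproxHHH} demands.

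For the coverage clause, pick any prefix $q \notin P$. By the Output procedure of Algorithm~\ref{alg:Skipper}, $q$ can only be omitted from $P$ when the conservative estimate satisfies $\widehat{C_{q|P}} < \theta N$. Theorem~\ref{thm:underCP} states that, in one dimension, this estimate is probabilistically conservative: $\Pr(\widehat{C_{q|P}} \ge C_{q|P}) \ge 1-\delta$. Chaining the two inequalities yields $\Pr(C_{q|P} < \theta N) \ge 1-\delta$, which is the required coverage bound. The same argument transfers verbatim to the two-dimensional case once one observes that Algorithm~\ref{alg:randHHH2D} implements the analogous inclusion--exclusion identity for $C_{q|P}$ over the $glb$ lattice, so its output is again probabilistically conservative by the identical balls-and-bins / Lemma~\ref{lemma:rare} argument used in Theorem~\ref{thm:underCP}.

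The main obstacle I anticipate is purely bookkeeping: making sure the parameter $\delta$ is threaded consistently through the accuracy bound (Lemma~\ref{lemma:accuracy}, which gives $\delta_a + 2\delta_s$) and through the conservative-estimator bound (Theorem~\ref{thm:underCP}, where the additive slack $2 Z_{1-\delta}\sqrt{NV}$ on Line~\ref{line:accSample} has to be tuned to the same failure probability). As long as the same $\delta = \delta_a + 2\delta_s$ is used to instantiate both lemmas, no union bound over $P$ or over prefixes is needed in the statement itself, because Definition~\ref{def:deltaapproxHHH} asks for the guarantee pointwise, and both invoked bounds already hold pointwise.
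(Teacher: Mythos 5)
Your proposal is correct and follows essentially the same route as the paper: accuracy is delegated to Lemma~\ref{lemma:accuracy}, and coverage is obtained by noting that $q \notin P$ forces $\widehat{C_{q|P}} < \theta N$ and then invoking the probabilistic conservativeness of Theorem~\ref{thm:underCP} (with the two-dimensional analogue handled by the corresponding inclusion--exclusion result, exactly as the paper does later). The only difference is that you spell out the chaining step that the paper leaves implicit, which is a harmless elaboration rather than a new argument.
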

\begin{proof}
We need to show that the accuracy and coverage guarantees hold.
Accuracy follows from Lemma~\ref{lemma:accuracy} and coverage follows from Theorem~\ref{thm:underCP} that implies that for every non heavy hitter prefix (q),  $\widehat{C_{q|P}}<\theta N$ and thus:  $$\Pr \left( {{C_{q|P}} < \theta N} \right) \ge 1 - \delta.$$
\end{proof}


\subsubsection{Two Dimensions}
\label{subsec:two}
Conditioned frequency is calculated differently for two dimensions, as
we use inclusion/exclusion principles and we need to show that these calculations are sound too.
We start by stating the following lemma:
\begin{lemma}
	\label{lemma:cp2d}(\cite{HHHMitzenmacher})
	In two dimensions,
	
	\[{C_{q|P}} = {f_q} - \sum\limits_{h \in G\left( {q|P} \right)} {{f_h}}  + \sum\limits_{h,h' \in G\left( {q|P} \right)} {f_{{\rm{glb}}\left( {h,h'} \right)}^{}} .\]
	\normalsize
\end{lemma}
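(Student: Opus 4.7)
The plan is to express $C_{q|P}$ as $f_q$ minus the total frequency of items generalized by $q$ and also by some prefix in $P$, and then compute that overlap by inclusion-exclusion over the elements of $G(q|P)$. The two-dimensional lattice structure is what makes the first-order (pairwise-glb) correction suffice.

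First I would rewrite $C_{q|P} = f_q - \sum_{e \in H_q \cap H_P} f_e$, using $H_{P \cup \{q\}} \setminus H_P = H_q \setminus (H_q \cap H_P)$. Next I would establish that $H_q \cap H_P = \bigcup_{h \in G(q|P)} H_h$. For any fully specified $e$ on the left, $e$ is generalized by $q$ and by some $q' \in P$; if $q' \prec q$, we walk up the chain of strict ancestors of $q'$ that lie in $P$ and still sit below $q$, terminating at an element of $G(q|P)$; if $q'$ is incomparable with $q$, then $glb(q, q')$ is a strict descendant of $q$ generalizing $e$, and the same argument places an element of $G(q|P)$ between $e$ and $q$. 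The reverse inclusion is immediate from $h \prec q$ and $h \in P$ for every $h \in G(q|P)$.

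The core step is inclusion-exclusion on $\bigcup_{h \in G(q|P)} H_h$, using $H_h \cap H_{h'} = H_{glb(h, h')}$, with the convention that $f_{glb(h, h')} = 0$ when $h$ and $h'$ have no common descendant. The main obstacle is that, a priori, triple and higher intersections also contribute. However, in two dimensions $G(q|P)$ is an antichain that admits a linear order in which the first-dimension component becomes strictly more specific while the second-dimension component becomes strictly more general. A direct calculation shows that, for any three antichain elements $h_a, h_b, h_c$ in this order, $glb(h_a, h_b, h_c) = glb(h_a, h_c)$ and $glb(h_a, h_c) \preceq h_b$. Hence the triple intersection coincides with the ``non-adjacent'' pair intersection and is already contained in $H_{h_b}$; an induction on antichain size then shows that all higher-order terms exactly cancel the pair terms whose glbs are dominated by another element of $G(q|P)$, leaving only the surviving pairwise contributions. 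This yields $\sum_{e \in \bigcup_{h} H_h} f_e = \sum_{h \in G(q|P)} f_h - \sum_{h, h' \in G(q|P)} f_{glb(h, h')}$.

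Substituting back into $C_{q|P} = f_q - \sum_{e \in H_q \cap H_P} f_e$ gives the claimed identity. The point I expect to be the main obstacle is the cancellation of higher-order meets against non-adjacent pair meets; it hinges on the two-dimensional structure of the lattice, and is exactly what would break in higher dimensions, where additional correction terms would be required.
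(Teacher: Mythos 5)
Your opening reduction $C_{q|P}=f_q-\sum_{e\in H_q\cap H_P}f_e$ is fine, but the argument breaks at the last step, and the break is fatal to the equality you display. Your own cancellation analysis is (essentially) correct for a staircase: for $h_a,h_b,h_c$ ordered as you describe, the triple meet equals ${\rm glb}(h_a,h_c)$ and is dominated by $h_b$, so the third-order inclusion--exclusion term cancels exactly the \emph{dominated} pair term. Carried through by induction, this yields $\sum_{e\in\cup_h H_h}f_e=\sum_h f_h-\sum f_{{\rm glb}(h,h')}$ where the pair sum runs only over pairs whose glb is \emph{not} generalized by a third member of $G(q|P)$ --- which is precisely the restricted sum that Algorithm~\ref{alg:randHHH2D} computes via its ``$\nexists h_3$'' test --- and \emph{not} the all-pairs sum in your final display. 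The two genuinely differ: take two-bit/two-bit hierarchies, unit frequency on all $16$ fully specified pairs, $q=(*,*)$ and $P=G(q|P)=\{(00,*),(0*,0*),(*,00)\}$. Then $f_q=16$, $\sum_h f_h=12$, the pairwise glbs are $(00,0*),(00,00),(0*,00)$ with frequencies $2,1,2$, so the all-pairs right-hand side is $9$, while the union of the three boxes has $8$ elements and hence $C_{q|P}=8$; the gap is exactly the dominated pair $f_{{\rm glb}((00,*),(*,00))}=1$. So the all-pairs identity cannot be proved --- only the ``$\le$'' direction (a Bonferroni truncation) holds in general, and that conservative bound is in fact the only thing the paper uses downstream (Theorem~\ref{thm:conservativeCP}); note also that the paper itself does not prove this lemma but imports it from~\cite{HHHMitzenmacher}.

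There is a second gap earlier: the set identity $H_q\cap H_P=\bigcup_{h\in G(q|P)}H_h$. The inclusion $\supseteq$ is immediate, but your argument for $\subseteq$ fails exactly in the two-dimensional case that matters: if $q'\in P$ is incomparable to $q$ yet $H_{q'}\cap H_q\neq\emptyset$, your ``walk up'' needs a member of $P$ that both generalizes $e$ and is strictly below $q$, and neither $q'$ (not below $q$) nor ${\rm glb}(q,q')$ (below $q$ but not in $P$) qualifies. Concretely, $q=(0*,*)$ and $P=\{(*,0*)\}$ give $G(q|P)=\emptyset$ while $H_q\cap H_P=H_{(0*,0*)}\neq\emptyset$, so the claimed equality fails again (the right-hand side of the lemma over-counts by $f_{(0*,0*)}$); once more only ``$\le$'' survives. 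A minor further imprecision: $G(q|P)$ need not admit the single staircase order you invoke (e.g., $(00,*)$ and $(01,*)$ can coexist); this is repairable because such pairs have empty glbs, but together with the points above it shows the cancellation bookkeeping was not actually completed. To salvage the write-up, either prove the inequality $C_{q|P}\le f_q-\sum_{h\in G(q|P)}f_h+\sum_{h,h'}f_{{\rm glb}(h,h')}$ (which is what the coverage analysis needs), or prove the exact identity for the restricted sum and note that it matches Algorithm~\ref{alg:randHHH2D} rather than the displayed lemma.
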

\noindent In contrast, Algorithm~\ref{alg:Skipper} estimates the conditioned frequency as:
\begin{lemma}
	\label{lemma:algCF2D}
	In two dimensions, Algorithm~\ref{alg:Skipper} calculates conditioned frequency in the following manner: \small\[\widehat {{C_{q|P}}} = \hat f_q^ +  - \sum\limits_{h \in G\left( {q|P} \right)} {\hat f_h^ - }  + \sum\limits_{h,h' \in G\left( {q|P} \right)} {\hat f_{{\rm{glb}}\left( {h,h'} \right)}^ + }  + 2{Z_{1 - \frac{\delta }{8}}}\sqrt {NV} .\]
	\normalsize
\end{lemma}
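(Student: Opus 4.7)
The statement is essentially a bookkeeping identity that unpacks the pseudocode of Algorithm~\ref{alg:Skipper}, specialized to the two-dimensional \textsc{calcPred} of Algorithm~\ref{alg:randHHH2D}. My plan is therefore to trace the algorithm line by line and check that the quantity eventually stored in $\widehat{C_{q|P}}$ matches the claimed closed form. No new probabilistic reasoning is needed; all the stochastic content has already been isolated into the estimators $\widehat{f_\cdot}^{\pm}$ and into the additive noise term, which comes from Lemma~\ref{lemma:poissonConfidence} applied as in the proof of Theorem~\ref{thm:underCP}.

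The first step is to apply Line~\ref{line:cp} of Algorithm~\ref{alg:Skipper}, which sets $\widehat{C_{q|P}} = \widehat{f_q}^{+} + \textsc{calcPred}(q,P)$, followed by Line~\ref{line:accSample}, which adds the sampling penalty (here appearing as $2 Z_{1-\delta/8}\sqrt{NV}$; the precise confidence level is whatever is budgeted to this step inside the 2D coverage union bound and I would simply take it as given). With these two lines accounted for, the lemma reduces to the claim that the return value of the 2D \textsc{calcPred} equals
\[
-\sum_{h \in G(q|P)} \widehat{f_h}^{-} \;+\; \sum_{h,h' \in G(q|P)} \widehat{f_{\mathrm{glb}(h,h')}}^{\,+}.
\]

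Next I would walk through Algorithm~\ref{alg:randHHH2D}. The first loop over $h \in G(q|P)$ accumulates $-\widehat{f_h}^{-}$ on Line~\ref{alg:second}, producing exactly the first sum. The second loop, over ordered pairs $h,h' \in G(q|P)$, adds $\widehat{f_{\mathrm{glb}(h,h')}}^{\,+}$ only when no third prefix $h_3 \in G(q|P)\setminus\{h,h'\}$ generalizes $\mathrm{glb}(h,h')$. I would interpret the pair sum in the lemma statement as running over exactly those ``non-redundant'' pairs, mirroring the structure of the exact identity in Lemma~\ref{lemma:cp2d}: in two dimensions a pair has a unique greatest lower bound, and when a third prefix of $G(q|P)$ also dominates that glb, the pair's contribution is already captured through pairs involving $h_3$, so including it would double-count.

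The main obstacle is justifying this pair-restriction cleanly so that the algorithm's output matches the inclusion-exclusion form cited from Lemma~\ref{lemma:cp2d}. I would handle it by arguing directly from the lattice structure: because each element has at most two parents in the 2D lattice, the family $\{H_h : h \in G(q|P)\}$ intersects only pairwise (any common point of three members is already a common point of some pair whose glb is dominated by the third), so discarding pairs for which a dominating $h_3$ exists removes precisely the redundant terms that inclusion-exclusion on $\bigcup_{h\in G(q|P)} H_h$ would otherwise overcount. Combining the output of \textsc{calcPred} with $\widehat{f_q}^{+}$ from Line~\ref{line:cp} and the additive noise term from Line~\ref{line:accSample} then yields the claimed closed form for $\widehat{C_{q|P}}$.
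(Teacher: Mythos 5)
Your proof is correct and takes essentially the same route as the paper's own two-line argument, which likewise just reads off Line~\ref{line:cp} (the $\hat f_q^{+}$ term), Line~\ref{line:accSample} (the $2Z_{1-\delta/8}\sqrt{NV}$ term), and the two loops of Algorithm~\ref{alg:randHHH2D} (the two sums). Your extra observation that the second loop only adds $\hat f_{\mathrm{glb}(h,h')}^{+}$ for pairs whose glb is not dominated by a third element of $G(q|P)$ — so the lemma's pair sum must be read with that restriction — flags a real mismatch the paper's proof silently glosses over, though your attempted justification of why the restriction is harmless really belongs to the conservativeness claim (Theorem~\ref{thm:conservativeCP}, deferred to~\cite{HHHMitzenmacher}) rather than to this bookkeeping lemma.
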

\begin{proof}
The proof follows from Algorithm~\ref{alg:Skipper}.
Line~\ref{line:cp} is responsible for the first element $\widehat{f_q^{+}}$ while Line~\ref{line:accSample} is responsible for the last element.
The rest is due to the function calcPredecessors in Algorithm~\ref{alg:randHHH2D}.
\end{proof}

\begin{theorem}
	\label{thm:conservativeCP}
	$\Pr \left( {\widehat {{C_{q|P}}} \ge {C_{q|P}}} \right) \ge 1 - \delta .$
\end{theorem}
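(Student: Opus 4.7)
The plan is to mirror the one-dimensional proof of Theorem~\ref{thm:underCP}, adapting only the accounting so that the inclusion--exclusion terms from Lemma~\ref{lemma:cp2d} are absorbed correctly. My first step is to subtract Lemma~\ref{lemma:cp2d} from Lemma~\ref{lemma:algCF2D} to isolate the estimation error:
\[
\widehat{C_{q|P}} - C_{q|P}
= \bigl(\widehat{f_q}^+ - f_q\bigr)
- \sum_{h \in G(q|P)} \bigl(\widehat{f_h}^- - f_h\bigr)
+ \sum_{h,h' \in G(q|P)} \bigl(\widehat{f_{\mathrm{glb}(h,h')}}^+ - f_{\mathrm{glb}(h,h')}\bigr)
+ 2 Z_{1-\delta/8}\sqrt{NV}.
\]
Each individual $\widehat{f}^\pm$ is, by construction of the underlying HH algorithm (Definition~\ref{Def:probFE}) together with the calibration in Theorem~\ref{thm:PUSCombined}, a conservative estimate of the corresponding sampled count, so what remains to control is the residual sampling deviation.

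My second step is to partition the balls that feed $\widehat{C_{q|P}}$ into a positive set $K^+$ of balls that contribute to $\widehat{f_q}^+$ or to some $\widehat{f_{\mathrm{glb}(h,h')}}^+$, and a negative set $K^-$ of balls that contribute to some $\widehat{f_h}^-$. Because RHHH picks exactly one lattice node per packet, every ball lands in at most one of these terms, hence in the corresponding Poisson experiment $E(Y_K^+) + E(Y_K^-) \le N/V$ and in particular each expectation is at most $N/V$. Applying Lemma~\ref{lemma:poissonConfidence} separately to $Y_K^+$ and $Y_K^-$ with parameter $\delta/8$ yields
\[
\Pr\!\left(\bigl|Y_K^{\pm} - E(Y_K^{\pm})\bigr| \ge Z_{1-\delta/8}\sqrt{N/V}\right) \le \delta/4.
\]

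Third, both $Y_K^+$ and $Y_K^-$ are monotone in the total number of balls, so Lemma~\ref{lemma:rare} transfers each Poisson deviation bound back to the exact balls-and-bins model at the price of a factor of two; a union bound over the two deviation events then bounds the overall failure probability by $\delta$. On the complementary high-probability event, multiplying through by $V$ shows that the combined signed sample error $V(Y_K^+ - E(Y_K^+)) - V(Y_K^- - E(Y_K^-))$ is dominated in absolute value by the $2 Z_{1-\delta/8}\sqrt{NV}$ padding already present in $\widehat{C_{q|P}}$, so $\widehat{C_{q|P}} \ge C_{q|P}$ as required.

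The main obstacle I anticipate is the bookkeeping under inclusion--exclusion: a single packet of type $e$ may simultaneously satisfy $e \preceq q$, $e \preceq h$, and $e \preceq \mathrm{glb}(h,h')$ for several pairs $h,h'$, so its contribution to the true $C_{q|P}$ telescopes. What saves the argument is that RHHH samples such a packet into exactly one lattice node per arrival, so in the estimator it contributes to only one of $\widehat{f_q}^+$, $\widehat{f_h}^-$, or $\widehat{f_{\mathrm{glb}(h,h')}}^+$; this is precisely what keeps $E(Y_K^+) + E(Y_K^-) \le N/V$ valid in two dimensions and allows the one-dimensional concentration argument to go through essentially unchanged.
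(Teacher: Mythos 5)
Your proposal follows essentially the same route as the paper's proof: reduce to the deterministic conservativeness of the inclusion--exclusion estimator (from the structure of~\cite{HHHMitzenmacher}), then bound the residual sampling deviation by splitting the contributing packets into positively and negatively affecting sets $K^+,K^-$, applying Lemma~\ref{lemma:poissonConfidence} with parameter $\delta/8$ to each, transferring back from the Poisson model via Lemma~\ref{lemma:rare} using monotonicity, and absorbing the result into the $2Z_{1-\delta/8}\sqrt{NV}$ padding. Your added remark that each packet updates exactly one lattice node (so $E(Y_K^+)+E(Y_K^-)\le N/V$) is a harmless sharpening of the paper's cruder ``at most $N$ balls'' bound and does not change the argument.
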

\begin{proof}
	Observe Lemma~\ref{lemma:cp2d} and notice that in deterministic~settings, as shown in~\cite{HHHMitzenmacher}, $$\widehat f_q^ +  - \sum\limits_{h \in G\left( {q|P} \right)} {\widehat f_h^ - }  + \sum\limits_{h,h' \in G\left( {q|P} \right)} \widehat f_{{\rm{glb}}\left( {h,h'} \right)}^{+}$$\normalsize
	is a conservative estimate for ${C_{q|P}}$.
	Therefore, we need to account for the randomization error and verify that with probability $1-\delta$ it is less than $2{Z_{1 - \frac{\delta }{8}}}\sqrt {NV}$.
		
	We denote by $K$ the packets that may affect $C_{q|P}$.
	Since the expression of $\widehat{C_{q|P}}$ is not monotonic, we split it into two sets: $K^{+}$ are packets that affect  $\widehat{C_{q|P}}$ positively and  $K^{-}$ affect it negatively.
	Similarly, we define $\{Y_i^K\}$ to be Poisson random variables that represent how many of the packets of $K$ are in each bin.
	
	 We do not know how many bins affect the sum, but we know for sure that there are no more than $N$ balls.
	 We define the random variable $Y^K_+$ that defines the number of packets from $K$ that fell in the corresponding bins to have a positive impact on $\widehat{C_{q|P}}$.
	 Invoking Lemma~\ref{lemma:poissonConfidence} on $Y^K_+$ yields that:
$$\Pr \left( \left| {Y_K^ +  - E\left( {Y_K^ + } \right)} \right| \ge {Z_{1-\frac{\delta }{8}}}\sqrt \frac{N}{V} \right) \le \frac{\delta }{4}.$$
	Similarly, we define $Y_{K}^-$ to be the number of packets from $K$ that fell into the corresponding buckets to create a negative impact on $\widehat{C_{q|P}}$ and Lemma~\ref{lemma:poissonConfidence} results in:
$$\Pr \left( \left| {Y_K^ -  - E\left( {{Y_K}^ - } \right)} \right| \ge {Z_{1-\frac{\delta }{8}}}\sqrt \frac{N}{V } \right) \le \frac{\delta }{4}.$$
$Y_{K}^+$ is monotonically increasing with the number of balls and $Y_{K}^-$ is monotonically decreasing with the number of balls.
We can apply Lemma~\ref{lemma:rare} and conclude that:
\small
\[\begin{array}{l}
\Pr \left( {\widehat {{C_{q|P}}} \ge {C_{q|P}}} \right) \le \\
2\Pr \left( {V\left( {Y_K^ -  + Y_K^ + } \right) \ge \left( {VE\left( {Y_K^ -  + Y_K^ + } \right) + 2{Z_{1 - \frac{\delta }{8}}}\sqrt {NV} } \right)} \right)  \\
\le 1 - 2\frac{\delta }{2} = 1 - \delta ,
\end{array}\]\normalsize completing the proof.
\end{proof}

\subsubsection{Putting It All Together}
We can now prove the coverage property for one and two dimensions.
\begin{corollary}
	\label{cor:coverage}
	If $N>\NB$ then RHHH satisfies coverage.
	That is, given a prefix $q \notin P$, where $P$ is the set of HHH returned by RHHH,  $$\Pr \left(C_{q|P}<\theta N\right) >1-\delta.$$
\end{corollary}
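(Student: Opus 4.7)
The plan is to derive the coverage statement essentially as an immediate corollary of the conservative estimation results already established (Theorem~\ref{thm:underCP} for one dimension and Theorem~\ref{thm:conservativeCP} for two dimensions), combined with the decision rule used by the output procedure of Algorithm~\ref{alg:Skipper}.

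First, I would unpack what it means for $q$ to be omitted from the returned set $P$. The \textsc{Output} procedure iterates over prefixes from the deepest level down to the most general one, and it adds $p$ to the growing set $P$ exactly when $\widehat{C_{p|P}} \ge \theta N$. Hence, at the moment $q$ is inspected, the local value of $\widehat{C_{q|P}}$ must have been strictly less than $\theta N$, since otherwise $q$ would have been inserted into $P$. I would then argue that the value of $\widehat{C_{q|P}}$ computed at the moment $q$ is inspected coincides with its value with respect to the final set $P$: only ancestors of $q$ can be added to $P$ after $q$ is examined (because the loop goes from deepest to most general), and $G(q|P)$ — the set of closest generalized prefixes used in \textsc{calcPred} — only contains descendants. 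Consequently, post-examination additions to $P$ cannot alter either $G(q|P)$ or any of the $\widehat{f_h}^{\pm}$ terms entering $\widehat{C_{q|P}}$.

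Second, I would invoke the conservativeness theorem appropriate to the dimension: Theorem~\ref{thm:underCP} in one dimension and Theorem~\ref{thm:conservativeCP} in two dimensions, both of which yield $\Pr(\widehat{C_{q|P}} \ge C_{q|P}) \ge 1 - \delta$. These results are applicable because $N > \NB$, which is precisely the threshold required for the frequency estimators underlying $\widehat{f_p}^{+}$ and $\widehat{f_p}^{-}$ to satisfy their probabilistic guarantees (via Lemma~\ref{lemma:accuracy} and the sampling confidence intervals from Section~\ref{sec:analSamples}).

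Finally, I would chain the two observations: with probability at least $1 - \delta$,
\[
C_{q|P} \;\le\; \widehat{C_{q|P}} \;<\; \theta N,
\]
yielding the claimed $\Pr(C_{q|P} < \theta N) \ge 1 - \delta$. No calculation is really needed beyond this chaining; the only subtlety worth flagging is the ``$G(q|P)$ is stable after $q$ is examined'' argument in the first step, which is the one place where one must look carefully at the loop order in the \textsc{Output} procedure. Everything else is mechanical reuse of Theorems~\ref{thm:underCP} and~\ref{thm:conservativeCP}.
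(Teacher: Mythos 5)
Your proposal is correct and takes essentially the same approach as the paper: since $q \notin P$, the \textsc{Output} rule forces $\widehat{C_{q|P}} < \theta N$, and Theorem~\ref{thm:underCP} (one dimension) or Theorem~\ref{thm:conservativeCP} (two dimensions) gives $C_{q|P} \le \widehat{C_{q|P}}$ with probability at least $1-\delta$, so the two facts chain exactly as in the paper's proof. Your additional observation that $G(q|P)$ and hence $\widehat{C_{q|P}}$ are unaffected by prefixes added after $q$ is examined is a detail the paper leaves implicit, and it is a correct and welcome clarification rather than a deviation.
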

\begin{proof}
	The proof follows form Theorem~\ref{thm:underCP} in one dimension, or
	Theorem~\ref{thm:conservativeCP} in two, that guarantee that in both cases: $\Pr \left( C_{q|P}<\widehat{C_{q|P}}\right) > 1-\delta$.
	
	The only case where $q \notin P$ is if $\widehat{C_{q|P}}<\theta N$.
    Otherwise, Algorithm~\ref{alg:Skipper} would have added it to $P$.
	However, with probability $1-\delta$, $C_{q|P}<\widehat{C_{q|P}}$, and therefore $C_{q|P} <\theta N$ as well.
\end{proof}

\subsection{RHHH Properties Analysis}
\label{sec:RHHH-prop}
Finally, we can prove the main result of our analysis. It establishes that if the number of packets is large enough, $RHHH$ is correct.
\begin{theorem}
	\label{thm:correctness}
	If $N>\NB$, then RHHH solves {\sc {$(\delta,\epsilon, \theta)$ - Approximate Hierarchical Heavy Hitters}}.
\end{theorem}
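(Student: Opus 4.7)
The plan is to observe that Theorem~\ref{thm:correctness} is really a bookkeeping combination of the two separate guarantees already established earlier in Section~\ref{sec:analysis}. Recall that Definition~\ref{def:deltaapproxHHH} breaks down into two independent requirements on the output set $P$: the \emph{accuracy} condition on every $p\in P$ and the \emph{coverage} condition on every $q\notin P$. My strategy is therefore to verify each one by invoking the corresponding prior result, then reconcile the parameter settings.

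First I would handle accuracy. Lemma~\ref{lemma:accuracy} shows that, provided $N>\NB$, Algorithm~\ref{alg:Skipper} achieves $\Pr(|f_p-\widehat{f_p}|\le \varepsilon N)\ge 1-\delta$ for every prefix with $\delta=\delta_a+2\delta_s$ and $\varepsilon=\varepsilon_a+\varepsilon_s$. In particular this holds for every $p\in P$, which is all that the accuracy clause of Definition~\ref{def:deltaapproxHHH} requires. Next I would handle coverage by appealing directly to Corollary~\ref{cor:coverage}, which asserts that under the same condition $N>\NB$ we have $\Pr(C_{q|P}<\theta N)>1-\delta$ for every $q\notin P$. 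Importantly, Corollary~\ref{cor:coverage} in turn rests on Theorem~\ref{thm:underCP} in the one-dimensional case and Theorem~\ref{thm:conservativeCP} in the two-dimensional case, so both regimes of RHHH are covered simultaneously.

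The only care needed is to check parameter consistency: the accuracy and coverage statements must refer to the same $(\delta,\varepsilon)$ pair. Both Lemma~\ref{lemma:accuracy} and Corollary~\ref{cor:coverage} tolerate the same failure probability $\delta=\delta_a+2\delta_s$ and error $\varepsilon=\varepsilon_a+\varepsilon_s$, so no additional union bound across the two events is required within each individual prefix check; the parameters simply line up. I would therefore conclude that Algorithm~\ref{alg:Skipper} simultaneously satisfies both clauses of Definition~\ref{def:deltaapproxHHH}, which is exactly the assertion that RHHH solves {\sc$(\delta,\varepsilon,\theta)$-Approximate Hierarchical Heavy Hitters} whenever $N>\NB$.

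Since all the nontrivial probabilistic work, including the Poisson approximation via Lemma~\ref{lemma:rare}, the confidence interval from Lemma~\ref{lemma:poissonConfidence}, and the conservative conditioned-frequency estimators for both one and two dimensions, has already been carried out, I do not expect a true obstacle here. If anything, the only subtlety worth flagging is to make explicit that accuracy is proved per individual prefix rather than uniformly over all of $P$: Definition~\ref{def:deltaapproxHHH} is stated pointwise in $p$ and in $q$, which matches the pointwise guarantees of Lemma~\ref{lemma:accuracy} and Corollary~\ref{cor:coverage}, so no additional union bound over the (possibly large) set of output prefixes is needed.
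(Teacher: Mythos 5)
Your proposal matches the paper's proof exactly: the paper also establishes Theorem~\ref{thm:correctness} by combining Lemma~\ref{lemma:accuracy} (accuracy) with Corollary~\ref{cor:coverage} (coverage) under the condition $N>\NB$. Your additional remarks on parameter consistency and the pointwise nature of the guarantees are correct elaborations of the same argument, not a different route.
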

\begin{proof}
	The theorem is proved by combining\\ Lemma~\ref{lemma:accuracy} and Corollary~\ref{cor:coverage}.
\end{proof}

Note that $\NB \triangleq \NBound$ contains the parameter $V$ in it. 
When the minimal measurement interval is known in advance, the parameter $V$ can be set to satisfy correctness at the end of the measurement. 
For short measurements, we may need to use $V=H$, while longer measurements justify using $V\gg H$ and achieve better performance. 
When considering modern line speed and emerging new transmission technologies, this speedup capability is crucial because faster lines deliver more packets in a given amount of time and thus justify a larger value of $V$ for the same measurement~interval.

For completeness, we prove the following.
\begin{theorem}
	\label{thm:O1}
	RHHH's update complexity is $O(1)$.
\end{theorem}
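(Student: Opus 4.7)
The plan is to read off the complexity directly from the pseudocode of the \textsc{Update} routine in Algorithm~\ref{alg:Skipper} and argue that each of its constituent steps takes constant time in the worst case. The routine performs: (i) one call to \texttt{randomInt}$(0,V)$, (ii) one integer comparison with $H$, and (iii) conditionally, one bitwise AND to extract the prefix followed by one call to the underlying heavy hitters algorithm's \textsc{Increment} operation. Steps (i) and (ii) are plainly $O(1)$, and the bitwise AND on a machine word is $O(1)$ on the standard RAM model.

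The only nontrivial step is (iii), the \textsc{Increment} call on the per-level HH instance. Here I would invoke the assumption, stated in Section~\ref{sec:randomized}, that RHHH may be instantiated with any counter-based HH algorithm satisfying Definition~\ref{Def:probFE} that supports constant-time updates. In particular, the chosen implementation uses Space Saving~\cite{SpaceSavings}, which admits a worst-case $O(1)$ update using the Stream-Summary data structure. Hence \textsc{Increment} is $O(1)$, and the three-step total is $O(1)$ per packet.

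Summing across the branches of the conditional, the \textsc{Update} routine executes at most a constant number of primitive operations regardless of $N$, $H$, $V$, or the hierarchy depth $L$. This is in sharp contrast to the prior approach of~\cite{HHHMitzenmacher}, whose per-packet cost is $\Theta(H)$ because every lattice node is updated on every arrival. I expect no real obstacle in the argument; the only subtlety worth flagging explicitly is that the $O(1)$ bound is \emph{worst case} (not merely amortized), since exactly one branch of the random test is taken per packet, and neither branch triggers any data-structure reorganization beyond a single Space Saving increment.
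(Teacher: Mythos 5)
Your proposal is correct and follows essentially the same route as the paper's own proof: it reads the \textsc{Update} routine of Algorithm~\ref{alg:Skipper} step by step, notes that drawing the random number and the comparison are $O(1)$, and relies on the worst-case $O(1)$ increment of Space Saving~\cite{SpaceSavings} for the conditional branch. Your added remarks (bitwise AND on a word, worst case versus amortized) are just a more explicit spelling-out of the same argument, so no gap to report.
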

\begin{proof}
Observe Algorithm~\ref{alg:Skipper}.
For each update, we randomize a number between $0$ and $V-1$, which can be done in $O(1)$.
Then, if the number is smaller than $H$, we also update a Space Saving instance, which can be done in $O(1)$ as well~\cite{SpaceSavings}.
\end{proof}
Finally, we note that our space requirement is similar to that of~\cite{HHHMitzenmacher}.
\begin{theorem}
	\label{thm:space}
	The space complexity of RHHH is $O\left(\frac{H}{\varepsilon_a}\right)$ flow table entries.
\end{theorem}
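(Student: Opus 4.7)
The plan is to derive the space bound directly from the structural description of RHHH given in Section~\ref{sec:randomized} together with the oversample-compensated parameter choice from Section~\ref{sec:analSamples}. Recall that RHHH inherits the data layout of~\cite{HHHMitzenmacher}: it maintains one independent frequency-estimation instance per lattice node, so there are exactly $H$ such instances. Since each instance is itself a counter-based algorithm (we use Space Saving, though by Definition~\ref{Def:probFE} any suitable counter algorithm works), the total space is the number of instances multiplied by the space per instance.

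First, I would observe that a Space Saving sketch solving $(\varepsilon_a', \delta_a)$-frequency estimation uses $O(1/\varepsilon_a')$ counters (flow-table entries). Next, I would substitute the actual parameter at which each instance is configured. As noted in the discussion following Corollary~\ref{cor:oversample}, to absorb the oversample event we set $\varepsilon_a' \triangleq \varepsilon_a/(1+\varepsilon_s)$; concretely, if $\varepsilon_a=0.001$ and $\varepsilon_s=0.001$, the counter budget per instance grows from $1000$ to $1001$. Thus each of the $H$ instances uses $O\!\left(\tfrac{1+\varepsilon_s}{\varepsilon_a}\right) = O(1/\varepsilon_a)$ counters, where the last step uses that $\varepsilon_s$ is a small constant (indeed, $\varepsilon_s \le 1$ in any meaningful setting).

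Summing over the $H$ lattice nodes then yields the bound $H \cdot O(1/\varepsilon_a) = O(H/\varepsilon_a)$ flow-table entries, matching the statement. I would close by remarking that this matches the space of MST~\cite{HHHMitzenmacher} asymptotically, so the $O(1)$ worst-case update time of Theorem~\ref{thm:O1} comes at no extra storage cost relative to the baseline.

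There is no real obstacle here: the result is essentially a bookkeeping statement once the per-instance counter bound of Space Saving is invoked. The only subtlety worth flagging in the write-up is the oversample adjustment from $\varepsilon_a$ to $\varepsilon_a'$, so that a reader does not worry about a hidden $(1+\varepsilon_s)$ blowup; making that factor explicit and absorbing it into the $O(\cdot)$ is the cleanest way to avoid any ambiguity.
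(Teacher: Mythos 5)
Your proposal is correct and follows essentially the same route as the paper: RHHH consists of $H$ independent Space Saving instances, each using $O(1/\varepsilon_a)$ entries, with no other space-significant structures. Your extra remark that the oversample adjustment $\varepsilon_a' = \varepsilon_a/(1+\varepsilon_s)$ only contributes a constant factor absorbed into the $O(\cdot)$ is a fine clarification but does not change the argument.
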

\begin{proof}
RHHH utilizes $H$ separate instances of Space Saving, each using $\frac{1}{\epsilon_a}$ table entries.
There are no other space significant data~structures.
\end{proof}

\section{Discussion}
\label{sec:discussion}
This work is about realizing hierarchical heavy hitters measurement in virtual network devices.
Existing HHH algorithms are too slow to cope with current improvements in network technology.
Therefore, we define a probabilistic relaxation of the problem and introduce a matching randomized algorithm called RHHH.
Our algorithm leverages the massive traffic in modern networks to perform simpler update operations.
Intuitively, the algorithm replaces the traditional approach of computing all prefixes for each incoming packets by sampling (if $V>H$) and then choosing one \emph{random} prefix to be updated.
While similar convergence guarantees can be derived for the simpler approach of updating all prefixes for each sampled packet, our solution has the clear advantage of processing elements in $O(1)$ worst case time.

We evaluated RHHH on four real Internet packet traces, consisting over 1 billion packets each and achieved a speedup of up to X62 compared to previous works.
Additionally, we showed that the solution quality of RHHH is comparable to that of previous work.
RHHH performs updates in constant time, an asymptotic improvement from previous works whose complexity is proportional to the hierarchy's size.
This is especially important in the two dimensional case as well as for IPv6 traffic that requires larger~hierarchies.

Finally, we integrated RHHH into a DPDK enabled Open vSwitch and evaluated its performance as well as the alternative algorithms.
We provided a dataplane implementation where HHH measurement is performed as part of the per packet routing tasks.
In a dataplane implementation, RHHH is capable of handling up to 13.8 Mpps, $4\%$ less than an unmodified DPDK OVS (that does not perform HHH measurement).
We showed a throughput improvement of X2.5 compared to the fastest dataplane implementations of previous~works.

Alternatively, we evaluated a distributed implementation where RHHH is realized in a virtual machine that can be deployed in the cloud and the virtual switch only sends the sampled traffic to RHHH.
Our distributed implementation can process up to 12.3 Mpps.
It is less intrusive to the switch, and offers greater flexibility in virtual machine placement.
Most importantly, our distributed implementation is capable of analyzing data from multiple network~devices.

Notice the performance improvement gap between our direct implementation -- X62, compared to the performance improvement when running over OVS -- X2.5.
In the case of the OVS experiments, we were running over a $10$Gbps link, and were bound by that line speed -- the throughput obtained by our implementation was only $4\%$ lower than the unmodified OVS baseline (that does nothing).
In contrast, previous works were clearly bounded by their computational overhead.
Thus, one can anticipate that once we deploy the OVS implementation on faster links, or in a setting that combines traffic from multiple links, the performance boost compared to previous work will be closer to the improvement we obtained in the direct~implementation.

A downside of RHHH is that it requires some minimal number of packets in order to converge to the desired formal accuracy guarantees.
In practice, this is a minor limitation as busy links deliver many millions of packets every second. 
For example, in the settings reported in Section~\ref{sec:acc+cov-error}, RHHH requires up to $100$ millions packets to fully converge, yet even after as little as $8$ millions packets, the error reduces to around $1\%$.
With a modern switch that can serve $10$ million packets per second, this translates into a $10$ seconds delay for complete convergence and around $1\%$ error after $1$ second.
As line rates will continue to improve, these delays would become even shorter accordingly. The code used in this work is open sourced~\cite{RHHHCode}

\paragraph*{Acknowledgments}
We thank Ori Rottenstreich for his insightful comments and Ohad Eytan for helping with the code release.
We would also like to thank the anonymous reviewers and our shepherd, Michael Mitzenmacher, for helping us improve this work.

This work was partially funded by the Israeli Science Foundation grant \#$1505/16$ and the Technion-HPI research school.
Marcelo Caggiani Luizelli is supported by the research fellowship program funded by CNPq (201798/2015-8).

{
	\bibliographystyle{plain}
	\bibliography{refs}
}
\end{document}